\documentclass[12pt]{article}
\usepackage{amsfonts,amsmath,amsbsy,amssymb,amsthm,graphicx,latexsym,graphics,epsfig,color}
\usepackage{mathrsfs}
\usepackage{bm}
\usepackage{caption}
\usepackage{subcaption}
\usepackage{array}
\usepackage{booktabs}
\usepackage{tabularx}
\usepackage{algorithm}
\usepackage{algpseudocode}
\usepackage{longtable,color}
\usepackage{multirow,epsf}
\usepackage{setspace}
\usepackage{epstopdf}
\usepackage{float,wasysym}
\usepackage{color}
\usepackage{adjustbox}
\usepackage{enumerate} 
\usepackage{authblk} 

\usepackage{indentfirst}
\usepackage{graphicx}			
\usepackage{mathrsfs}
\usepackage{amsfonts}
\usepackage{amsmath,amssymb,mathrsfs,bm,latexsym,graphicx,color}		
\usepackage{amssymb}
\usepackage[dvipsnames]{xcolor}
\usepackage{colortbl}
\usepackage{natbib} \bibpunct{(}{)}{;}{a}{,}{,} 
\usepackage[normalem]{ulem}
\usepackage[colorlinks=true,urlcolor=blue,citecolor=blue,linkcolor=blue,bookmarks=true]{hyperref} 
\usepackage{titlesec}
\titleformat{\section}[block]{\large\bfseries}{\thesection}{1em}{}

\usepackage{multicol}
\usepackage{multirow}
\usepackage{tikz,circuitikz} 

\usepackage[T1]{fontenc}
\usepackage[utf8]{inputenc}

\allowdisplaybreaks

\newtheorem{proposition}{Proposition}

\usepackage{tikz,xcolor,hyperref}

\begin{document}
	\title{\bf \large  Reliability and Preventive Maintenance for Multivariate Gamma Degradation with Shared Random Effects and Cumulative Environmental Shock Damage}
\author{	  
	Jiale Niu\\ 
	{ College of Mathematics and Statistics }\\
	{Northwest Normal University, Lanzhou 730070, China}}


\maketitle

\begin{abstract}
	In industrial engineering, many highly reliable products have multiple performance characteristics that degrade simultaneously over time. Owing to common manufacturing factors and shared operating environments, their degradation processes are often statistically dependent. External environmental shocks may also accumulate into common damage and jointly contribute to system failure. To characterize such dependence, unit heterogeneity, and cumulative environmental damage, this study proposes a multivariate Gamma degradation model with a shared random effect and a shared environmental degradation process. Each degradation process consists of an individual Gamma process and a shared Gamma process, while the random effect captures latent heterogeneity across units and induces additional dependence. An explicit expression for system reliability is derived. For parameter estimation, an expectation maximization algorithm combined with variational inference is developed, and parametric bootstrap is used for uncertainty quantification. A preventive maintenance optimization approach is further established. Simulation studies and a coating degradation application demonstrate the effectiveness and practical applicability of the proposed methods.
\end{abstract}

\noindent {\bf Keywords}: 
Gamma process; Random effects; Environmental shock damage; Variational inference; Degradation; Preventive replacement.
\medskip 

\section{Introduction}\label{intro} 
In engineering applications, life testing has long been widely used to obtain product failure information and assess product reliability. However, with the rapid development of highly reliable and long life products, traditional life testing methods face increasingly significant limitations. On the one hand, for highly reliable products, a sufficient number of failure events are often difficult to observe within a limited testing period, resulting in a severe shortage of failure data. On the other hand, for high value products, conducting large scale life testing experiments usually requires substantial time and economic resources, making them impractical for many engineering applications. To overcome these challenges, degradation testing has gradually emerged as an effective alternative approach. By continuously monitoring performance characteristics (PCs) that are closely associated with product health conditions, degradation testing provides valuable information on the temporal evolution of product performance (\cite{hajiha2021degradation,hong2020nonparametric}). Compared with traditional reliability analysis methods based on failure data, degradation data can reveal the deterioration process of product performance at an earlier stage and contain richer information regarding product health states. Therefore, degradation based reliability analysis has been widely recognized as an important approach for improving the accuracy and efficiency of reliability assessment (\cite{chen2018uncertainty,ye2019strategic}).

Degradation models can generally be classified into two major categories: general path models and stochastic process based degradation models(\cite{ye2015stochastic}). General path models are essentially mixed effects regression models, in which fixed parameters and random parameters are used to describe the population degradation trend and unit heterogeneity, respectively. Specifically, \cite{fugate2017analyzing} proposed a Bayesian random effects spline regression model, where cubic splines with natural boundary conditions were employed to flexibly approximate individual degradation paths, and the probability distribution of spline coefficients was used to characterize unit heterogeneity. Based on the general path cumulative exposure model, \cite{ye2019strategic} further considered the correlation between initial product performance and degradation rates and developed a ranking based accelerated stress allocation strategy. Subsequently, \cite{xu2020bias} proposed a two stage approach for degradation data analysis, where a nonlinear mixed effects model was first used to fit individual degradation paths, and the lifetime distribution was then estimated by maximizing the asymptotic marginal distribution of pseudo lifetimes. In addition to general path models based on regression structures, stochastic process based degradation models have received considerable attention in both industry and academia because they can more naturally characterize the stochastic evolution of degradation processes and provide desirable mathematical properties with clear physical interpretations. Among them, Gamma processes, Wiener processes, and inverse Gaussian (IG) processes are three typical stochastic process models for degradation analysis. In particular, Gamma processes have been widely adopted in reliability modeling due to their nonnegative increments and monotonic increasing properties, which make them suitable for describing irreversible cumulative degradation processes, such as wear, corrosion, and crack propagation. Moreover, the mathematical tractability of Gamma processes enables their flexible extensions by incorporating covariates, random effects, and other stochastic factors. \cite{abdel1975gamma} was the first to describe cumulative wear as a Gamma process and investigated the lifetime distribution of a device when the degradation level first reached the failure threshold. Subsequently, to improve the applicability of Gamma processes under practical degradation conditions, \cite{bagdonavicius2001estimation} incorporated environmental factors and stress related covariates into Gamma degradation models and studied the estimation of reliability and degradation characteristics in a dynamic environment where shock intensity depended on the degradation state. To account for unit heterogeneity, \cite{lawless2004covariates} further introduced random effects into Gamma process models to characterize degradation differences among different units and used the model for lifetime prediction under failure thresholds. Considering the measurement errors commonly encountered in degradation observations, \cite{lu2013efficient} developed an efficient parameter estimation method for Gamma process models with measurement errors by combining the Genz transformation and quasi Monte Carlo methods. Furthermore, \cite{ling2019bayesian} constructed a two stage Gamma degradation model with random change points and employed Bayesian methods and stochastic expectation maximization methods to predict individual residual lifetimes. Other extensions of Gamma processes can be found in \cite{giorgio2018bayesian,song2020time} and \cite{li2022statistical}. In recent years, with the increasing reliability requirements of complex engineering products, degradation modeling under multiple influencing factors and unit heterogeneity has gradually become an important research topic. The aforementioned studies on Gamma processes have provided a solid foundation for incorporating covariates, random effects, and complex degradation mechanisms into reliability models. In comparison, Wiener processes and inverse Gaussian processes have also been widely applied in degradation based reliability analysis because they are capable of describing degradation paths with stochastic fluctuations. For further discussions on degradation models based on Wiener processes and inverse Gaussian processes and their applications, interested readers may refer to \cite{peng2015inverse,chen2015condition,zhai2018random,wang2021geometric}, and the references therein.

With the increasing functional complexity and reliability requirements of complex engineering products, a single performance indicator is often insufficient to fully characterize the overall health condition of a product. In practical applications, products are usually characterized by multiple performance characteristics (PCs), and their reliability depends on the joint degradation behavior of these PCs(\cite{yi2025statistical}). Therefore, degradation modeling with multiple performance characteristics has gradually become an important research topic in reliability analysis. \cite{pan2011reliability} developed a reliability model for products with multiple performance characteristics based on Gamma processes and Copula functions, where the statistical dependence among different degradation processes was characterized through their joint distribution. Subsequently, \cite{fang2020copula} further applied Copula methods to reliability analysis of degrading systems with dependent failure mechanisms to capture more flexible dependence structures among multiple degradation processes. In recent years, researchers have further investigated modeling approaches for dependence among multiple degradation processes and the effects of shared factors on degradation behaviors. \cite{zhai2023multivariate} proposed a multivariate stochastic degradation model that introduced a shared stochastic process to characterize the correlated degradation behaviors of multiple performance characteristics affected by common factors. Similarly, \cite{zhuang2025multivariate} incorporated common effects into inverse Gaussian processes and developed a multivariate degradation model with shared influence mechanisms for degradation based reliability prediction. These studies indicate that common factors may serve as an important source of dependence among multiple PC degradation processes.

However, in practical engineering applications, the dependence among PCs is often influenced by multiple common factors. On the one hand, degradation processes in multi component systems may be affected by shared material properties, manufacturing variations, and changes in operating conditions. On the other hand, environmental factors, such as temperature variations, ultraviolet radiation, chemical corrosion, and mechanical shocks, may simultaneously affect multiple PCs and introduce common cumulative damage effects. Therefore, describing the statistical dependence among degradation processes solely through Copula functions or covariance structures may be insufficient to fully characterize the correlated degradation behaviors induced by shared factors. In addition, variations in manufacturing processes, material properties, and operating environments among product units may lead to different degradation behaviors across units, resulting in unit heterogeneity. To address this issue, \cite{lawless2004covariates} introduced random effects into Gamma process models to characterize degradation differences among different units. Recently, \cite{song2025multivariate} further developed a multivariate Gamma degradation model based on hierarchical random effects to improve the characterization of heterogeneity in complex product degradation. However, existing studies usually consider PCs dependence, unit heterogeneity, or shared influencing factors separately. Research on a unified framework that simultaneously accounts for the joint effects of unit heterogeneity and shared environmental damage on the degradation processes of multiple PCs remains limited. Therefore, developing a degradation model that can characterize dependent degradation behaviors among multiple PCs while simultaneously considering unit heterogeneity and shared environmental damage remains an important issue in the reliability analysis of complex engineering products.

Motivated by the above observations, we propose a multivariate Gamma degradation model for products with multiple dependent PCs. In the proposed model, the degradation process of each PC is represented as the sum of an individual degradation component and a shared degradation component. Random effects are introduced to characterize unit heterogeneity. Meanwhile, since the random effects are shared among multiple PCs, they also capture part of the dependence among PCs induced by latent unit variability. In addition, the shared degradation component is introduced to describe the common environmental damage affecting multiple PCs. This modeling structure simultaneously accounts for unit to unit variability and the dependence among multiple degradation processes arising from different sources, thereby providing a more comprehensive characterization of multivariate degradation behaviors in complex engineering products. Based on the proposed model, statistical inference procedures are developed, and reliability assessment as well as preventive maintenance policy are investigated for complex products with dependent degradation processes.

The major contributions of this study are summarized as follows:

\begin{enumerate}
	\item [(1)]A multivariate Gamma degradation model is developed to characterize dependent degradation among multiple PCs within the same product unit through individual and shared degradation components, capturing the effects of latent unit characteristics and shared environmental damage.
	\item [(2)]Random effects are incorporated to account for unit heterogeneity across product units and shared latent unit characteristics among PCs within the same unit, thereby improving the characterization of complex degradation.
	\item [(3)]Statistical inference and reliability assessment methods are developed for the proposed model, where an EM algorithm combined with variational inference is employed for parameter estimation.
	\item [(4)]Preventive maintenance optimization is conducted based on the developed reliability model, and numerical studies and a real degradation data analysis are provided to demonstrate the effectiveness of the proposed method.
\end{enumerate}

The remainder of this manuscript is organized as follows. Section \ref{section2} presents the model formulation, where a multivariate Gamma degradation model is developed for products with multiple dependent PCs by incorporating individual degradation components, shared degradation components, and random effects. Section \ref{section3} derives the reliability function and analyzes the reliability characteristics of products with dependent degradation processes. Section \ref{section4} develops the statistical inference procedure for the proposed model, where an EM algorithm combined with variational inference is employed for parameter estimation. Section \ref{section5} investigates preventive maintenance strategies based on the proposed reliability model and determines the corresponding maintenance decisions. Section \ref{section6} conducts simulation studies to evaluate the performance of the proposed method under different experimental settings. Section \ref{section7} presents a real case study based on degradation data of coating materials to demonstrate the practical applicability of the proposed model. Section \ref{section8} concludes this manuscript and discusses potential directions for future research.

\section{Model formulation}\label{section2}
A stochastic process $\{X(t), t \ge 0\}$ is called the gamma process with scale parameter $\lambda > 0$ and shape function $\eta(t)$ if it satisfies the following conditions:

\begin{itemize}
	\item $X(t)$ has independent increments;
	\item For any $t > s \ge 0$, $X(t) - X(s)$ follows the gamma distribution $Ga(\eta(t) - \eta(s),\lambda)$, 
	\end{itemize}
	where $\eta(\cdot)$ is a monotone increasing function with $\eta(0) = 0$, and $\Gamma(x) = \int_{0}^{\infty} e^{-z} z^{x-1} \mathrm{d}z$ is the gamma function. The probability density function (PDF) of $Ga(\eta, \lambda)$ are defined as
	\begin{equation*}
		f_{Ga}(x; \eta,\lambda) = \frac{\lambda^\eta}{\Gamma(\eta)}e^{-\lambda x} x^{\eta-1}, \quad x > 0, \, \lambda > 0, \, \eta > 0.
	\end{equation*}
	With loss of generality, we assume $X(0)\equiv0$, then $X(t)$ follows $Ga(\eta(t), \lambda)$ with expectation $\eta(t)/\lambda$ and variance $\eta(t)/\lambda^2$.
	
	Many engineering systems are monitored through multiple performance characteristics (PCs), whose degradation collectively reflects the health condition of the system. A representative example is a hydraulic pump, whose health condition is commonly characterized by multiple performance indicators, including volumetric efficiency, flow rate, outlet pressure, vibration signals, and leakage-related measurements. These indicators reflect different aspects of pump degradation and are often affected by common wear mechanisms and operating conditions, resulting in statistically dependent degradation behaviors.
	
	Besides the intrinsic degradation caused by material wear and fatigue, hydraulic pumps are frequently subjected to external shock environments, such as pressure fluctuations, vibration loads, contaminant particles in the hydraulic fluid, and transient operating conditions. These shocks may not only generate immediate damage but also accelerate the subsequent degradation process. Moreover, their effects are often cumulative and persistent rather than instantaneous. Consequently, representing shock effects through a continuous stochastic degradation process is more realistic than modeling them as isolated damage increments.
	
	Motivated by the above considerations, we consider a unit characterized by $K$ performance characteristics, each of which undergoes a gradual degradation process over time. Let $X_k(t), \ k=1,2,\ldots,K$, denote the intrinsic degradation process associated with the $k$th PC. In addition to intrinsic degradation, the unit is exposed to an external shock environment, whose cumulative effect is represented by a common shock-induced degradation process $Z(t)$.
	
	The dependence among different PCs naturally arises from several practical considerations. First, common manufacturing variability—including material heterogeneity, production tolerances, and assembly imperfections—may jointly influence the degradation behaviors of multiple PCs. Second, all PCs operate under the same environmental and operational conditions, where temperature, humidity, vibration, and load may simultaneously affect their degradation rates. Third, exposure to a common shock environment introduces additional dependence, since a single shock may simultaneously impact multiple PCs. To capture these sources of dependence, a shared random-effect factor $\lambda$ is introduced. Conditional on $\lambda$, the intrinsic degradation processes $X_k(t)$ and the shock-induced degradation process $Z(t)$ are mutually independent. The latent factor $\lambda$ captures the overall degradation susceptibility of the unit and induces correlation among all degradation components.
	
	Under this framework, the total degradation process of the $k$th PC is represented as
	\begin{equation}\label{a21}
		\left\{
		\begin{aligned}
			Y_k(t) &= X_k(t) + Z(t),\\
			X_k(t)\mid \lambda &\sim \text{Ga}(\eta_k(t),\lambda),\\
			Z(t)\mid \lambda &\sim \text{Ga}(\eta_0(t),\lambda),\\
			\lambda &\sim \text{Ga}(\omega,\rho),
		\end{aligned}
		\right.
	\end{equation}
	where $\eta_k(t)$ and $\eta_0(t)$ are monotone increasing functions of time $t$, representing the cumulative effects of the intrinsic and shock-induced degradations, respectively. This additive construction allows the model to capture both the individual stochastic degradation of each PC and the common degradation induced by shared environmental factors and shocks, providing a realistic framework for reliability assessment and preventive maintenance planning.
	
	By the additive property of Gamma distributions, and noting that
	$X_k(t)$ and $Z(t)$ are conditionally independent given the shared
	random-effect factor $\lambda$, the total degradation process
	$Y_k(t)$ remains Gamma distributed. Specifically,
	\begin{align}\label{a10}
		Y_k(t)\mid\lambda
		\sim
		\mathrm{Ga}\bigl(\eta_k(t)+\eta_0(t),\lambda\bigr).
	\end{align}
	
	Therefore, the proposed model preserves the analytical tractability
	of Gamma-process-based degradation models, which facilitates subsequent
	reliability analysis and statistical inference. Based on this model
	structure, the following proposition can be established.
	
\begin{proposition}\label{pro1}
		For $k=1,\ldots,K$, the mean and variance of the degradation process $Y_k(t)$ are
		\[
		E\{Y_k(t)\}
		=
		\frac{\rho}{\omega-1}\left[ \eta_k(t)+\eta_0(t)\right] ,
		\]
		and
		\[
		\operatorname{Var}\{Y_k(t)\}
		=
		\frac{
			\rho^2
			(\eta_k(t)+\eta_0(t))
			\left[
			\eta_k(t)+\eta_0(t)+\omega-1
			\right]
		}{
			(\omega-1)^2(\omega-2)
		}.
		\]
		Moreover, for any $k_1\neq k_2$, the covariance between two total
		degradation processes is given by
		\[
		\operatorname{Cov}\{Y_{k_1}(t),Y_{k_2}(t)\}
		=
		\frac{\rho^2(\eta_{k_1}(t)+\eta_0(t))(\eta_{k_2}(t)+\eta_0(t))}{(\omega-1)^2(\omega-2)}+\frac{\rho^2\eta_0(t)}{(\omega-1)(\omega-2)}.
		\]
\end{proposition}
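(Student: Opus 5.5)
The approach is to condition on the shared random effect $\lambda$ and apply the laws of total expectation, total variance and total covariance. Conditionally on $\lambda$, everything is explicit: by \eqref{a10}, $Y_k(t)\mid\lambda\sim \mathrm{Ga}(\eta_k(t)+\eta_0(t),\lambda)$. With the rate parametrization used in the paper, this gives
\[
E\{Y_k(t)\mid\lambda\}=\frac{\eta_k(t)+\eta_0(t)}{\lambda},\qquad \operatorname{Var}\{Y_k(t)\mid\lambda\}=\frac{\eta_k(t)+\eta_0(t)}{\lambda^{2}} .
\]
The whole computation therefore reduces to the inverse moments $E(\lambda^{-1})$ and $E(\lambda^{-2})$ under $\lambda\sim\mathrm{Ga}(\omega,\rho)$.

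The first step is to compute these inverse moments. I read $\mathrm{Ga}(\omega,\rho)$ in the same way as the density $f_{Ga}$ defined above, so $\rho$ is a rate. Integrating $\lambda^{-m}$ against $\rho^{\omega}\lambda^{\omega-1}e^{-\rho\lambda}/\Gamma(\omega)$ gives $\rho^{m}\Gamma(\omega-m)/\Gamma(\omega)$. Hence
\[
E(\lambda^{-1})=\frac{\rho}{\omega-1},\qquad E(\lambda^{-2})=\frac{\rho^{2}}{(\omega-1)(\omega-2)},\qquad \operatorname{Var}(\lambda^{-1})=\frac{\rho^{2}}{(\omega-1)^{2}(\omega-2)}.
\]
These require $\omega>2$, an implicit assumption of the proposition that I would state explicitly. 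This step is where care is needed. The main obstacle is making sure the parametrization (rate rather than scale) and the finiteness condition $\omega>2$ are consistent with the stated formulas. The formulas do come out as claimed under the rate reading.

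The mean then follows at once from $E\{Y_k(t)\}=E[E\{Y_k(t)\mid\lambda\}]=(\eta_k+\eta_0)E(\lambda^{-1})$. For the variance I would write
\[
\operatorname{Var}\{Y_k(t)\}=E[\operatorname{Var}\{Y_k(t)\mid\lambda\}]+\operatorname{Var}[E\{Y_k(t)\mid\lambda\}]=(\eta_k+\eta_0)E(\lambda^{-2})+(\eta_k+\eta_0)^{2}\operatorname{Var}(\lambda^{-1}).
\]
Putting the two terms over the common denominator $(\omega-1)^2(\omega-2)$ gives the factor $(\eta_k+\eta_0)(\eta_k+\eta_0+\omega-1)$.

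For the covariance with $k_1\neq k_2$, I would use
\[
\operatorname{Cov}\{Y_{k_1},Y_{k_2}\}=E[\operatorname{Cov}(Y_{k_1},Y_{k_2}\mid\lambda)]+\operatorname{Cov}[E(Y_{k_1}\mid\lambda),E(Y_{k_2}\mid\lambda)].
\]
Since $X_{k_1}$, $X_{k_2}$ and $Z$ are conditionally independent given $\lambda$, the conditional covariance reduces to $\operatorname{Var}\{Z(t)\mid\lambda\}=\eta_0(t)/\lambda^{2}$. Its expectation is $\rho^{2}\eta_0/\{(\omega-1)(\omega-2)\}$. The second term equals $(\eta_{k_1}+\eta_0)(\eta_{k_2}+\eta_0)\operatorname{Var}(\lambda^{-1})$. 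Adding the two reproduces the stated expression, which completes the plan.
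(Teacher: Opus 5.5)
Your proposal is correct and follows the paper's proof. Both condition on $\lambda$, apply the laws of total expectation and total variance, and reduce everything to the inverse moments $E(\lambda^{-1})$ and $E(\lambda^{-2})$, which the paper also computes by direct integration. For the covariance, the paper expands $E[(X_{k_1}+Z)(X_{k_2}+Z)]$ using $E[Z^2\mid\lambda]=\eta_0(\eta_0+1)/\lambda^2$ and regroups the result into $(\eta_{k_1}+\eta_0)(\eta_{k_2}+\eta_0)\operatorname{Var}(\lambda^{-1})+\eta_0E(\lambda^{-2})$; this is exactly your law-of-total-covariance decomposition, and your explicit condition $\omega>2$ is a useful addition that the paper leaves implicit.
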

\begin{proof}
	By the law of total expectation,
	\[E[Y_k(t)]=E[E[Y_k(t)|\lambda]]=[\eta_k(t)+\eta_0(t)]E[\lambda^{-1}]=\frac{\rho}{\omega-1}\left[ \eta_k(t)+\eta_0(t)\right],
	\]
	\[\begin{aligned}
		\operatorname{Var}(Y_k(t)) &= E[\operatorname{Var}(Y_k(t) \mid \lambda)] + \operatorname{Var}(E[Y_k(t) \mid \lambda]) \\
		&=(\eta_k(t)+\eta_0(t)) E[\lambda^{-2}]+(\eta_k(t)+\eta_0(t))^2\operatorname{Var}[\lambda^{-1}]\\
		&=\frac{\rho^2(\eta_k(t)+\eta_0(t))}{(\omega-1)(\omega-2)}+\frac{\rho^2(\eta_k(t)+\eta_0(t))^2}{(\omega-1)^2(\omega-2)}\\
		&=\frac{\rho^2
			(\eta_k(t)+\eta_0(t))
			\left[
			\eta_k(t)+\eta_0(t)+\omega-1
			\right]}{(\omega-1)^2(\omega-2)}
	\end{aligned}
	\]
	The calculations of the inverse-order moments $E[\lambda^{-1}]$ and $E[\lambda^{-2}]$ are as follows
	\[\begin{aligned}
	E[\lambda^{-1}]&=\int_{0}^{\infty}\lambda^{-1}\frac{\rho^\omega}{\Gamma(\omega)}\lambda^{\omega-1}\exp(-\rho\lambda)\mathrm{d}\lambda\\
	&=\frac{\rho^\omega}{\Gamma(\omega)}\int_{0}^{\infty}\lambda^{\omega-2}\exp(-\rho\lambda)\mathrm{d}\lambda\\
	&=\frac{\rho}{\omega-1},
	\end{aligned}
	\]
	and
	\[\begin{aligned}
		E[\lambda^{-2}]&=\int_{0}^{\infty}\lambda^{-2}\frac{\rho^\omega}{\Gamma(\omega)}\lambda^{\omega-1}\exp(-\rho\lambda)\mathrm{d}\lambda\\
		&=\frac{\rho^\omega}{\Gamma(\omega)}\int_{0}^{\infty}\lambda^{\omega-3}\exp(-\rho\lambda)\mathrm{d}\lambda\\
		&=\frac{\rho^2}{(\omega-1)(\omega-2)}.
	\end{aligned}
	\]
	The covariance function between $Y_{k_1}(t)$ and $Y_{k_2}(t)$ can be derived as follows
	\[\begin{aligned}
	&\operatorname{Cov}[Y_{k_1}(t),Y_{k_2}(t)]\\
	&=E[Y_{k_1}(t)Y_{k_2}(t)]-E[Y_{k_1}(t)]E[Y_{k_2}(t)]\\
	&=E[E[Y_{k_1}(t)Y_{k_2}(t)]|\lambda]-E[Y_{k_1}(t)]E[Y_{k_2}(t)]\\
	&=E[E[(X_{k_1}(t)+Z(t))(X_{k_2}(t)+Z(t))]|\lambda]-E[Y_{k_1}(t)]E[Y_{k_2}(t)]\\
	&=E[E[X_{k_1}(t)X_{k_2}(t)]|\lambda]+E[E[X_{k_1}(t)Z(t)]|\lambda]+E[E[X_{k_2}(t)Z(t)]|\lambda]+E[E[Z^2(t)]|\lambda]\\
	&\quad-E[Y_{k_1}(t)]E[Y_{k_2}(t)]\\
	&=E[\frac{\eta_{k_1}(t)\eta_{k_2}(t)}{\lambda^2}]+E[\frac{\eta_{k_1}(t)\eta_{0}(t)}{\lambda^2}]+E[\frac{\eta_{k_2}(t)\eta_{0}(t)}{\lambda^2}]+E[\frac{\eta_{0}(t)(\eta_{0}(t)+1)}{\lambda^2}]\\
	&\quad-(\eta_{k_1}(t)+\eta_0(t))(\eta_{k_2}(t)+\eta_0(t))E^2[\lambda^{-1}]\\
	&=(\eta_{k_1}(t)+\eta_0(t))(\eta_{k_2}(t)+\eta_0(t))\{E[\lambda^{-2}]-E^2[\lambda^{-1}]\}+\eta_{0}(t)E[\lambda^{-2}]\\
	&=\frac{\rho^2(\eta_{k_1}(t)+\eta_0(t))(\eta_{k_2}(t)+\eta_0(t))}{(\omega-1)^2(\omega-2)}+\frac{\rho^2\eta_0(t)}{(\omega-1)(\omega-2)}.
	\end{aligned}
	\]
\end{proof}

Proposition \ref{pro1} demonstrates that the proposed model is capable of characterizing not only the marginal degradation behavior of individual performance characteristics (PCs) but also the dependence structure among different PCs. For any (k=1,\ldots,K), the mean and variance of the degradation process $Y_k(t)$ are jointly determined by the intrinsic degradation function $\eta_k(t)$) and the shock-induced degradation function $\eta_0(t)$). Since both functions are assumed to be monotonically increasing with time, the degradation level and its associated uncertainty increase progressively throughout the service life of the unit.

More importantly, the covariance expression derived in Proposition \ref{pro1} reveals that the dependence among different PCs originates from two distinct sources. Specifically,
\[
\operatorname{Cov}\{Y_{k_1}(t),Y_{k_2}(t)\}
=
\underbrace{
	\frac{\rho^2[\eta_{k_1}(t)+\eta_0(t)]
		[\eta_{k_2}(t)+\eta_0(t)]}
	{(\omega-1)^2(\omega-2)}
}_{\substack{\text{dependence induced}\\
		\text{by the shared random effect}}}
+
\underbrace{
	\frac{\rho^2\eta_0(t)}
	{(\omega-1)(\omega-2)}
}_{\substack{\text{dependence induced}\\
		\text{by the common shock process}}},
\quad k_1\neq k_2 .
\]

The covariance structure has a clear physical interpretation. Its first component is induced by the shared random effect factor $\lambda$, which characterizes unit-level heterogeneity due to common material, manufacturing, and operating conditions. Its second component arises from the common shock-induced degradation process $Z(t)$, through which cumulative external shock damage is shared by all PCs. Therefore, the proposed model captures two practically meaningful dependence sources: heterogeneity-induced dependence and shock-induced dependence, enabling an interpretable representation of synchronized degradation among multiple PCs.

Furthermore, the Gamma distribution is adopted to model the random-effect factor $\lambda$ for several reasons. First, the Gamma distribution ensures that the random-effect parameter remains positive, which is consistent with the physical constraints of degradation modeling. Second, the Gamma distribution is chosen because of its desirable conjugate properties, which facilitate the derivation of analytical results and simplify subsequent computation. This feature substantially simplifies both theoretical derivations and statistical inference. Finally, Gamma-distributed random effects have been widely adopted in the degradation modeling literature. For instance, \cite{lawless2004covariates},
\cite{tsai2012optimal}, and \cite{wang2021degradation} employed Gamma distributions to characterize unit-to-unit heterogeneity. Consequently, the Gamma distribution is
adopted to model the random-effect factor because it provides a
reasonable balance between physical interpretability, mathematical
tractability, and consistency with existing studies.
\begin{proposition}\label{pro2}
	The joint PDF of $Y_1(t),Y_2(t),\dots,Y_K(t)$ can be obtained as
	\begin{align}\label{a1}
		f_{\boldsymbol{Y}(t)}(y_1, \dots, y_K) = &\int_{0 }^{\infty}\int_{0}^{\tilde{y}}\prod_{k=1}^{K} f_{X_k(t)\mid\lambda}(y_k-z;\eta_k(t),\lambda)f_{Z(t)\mid\lambda}(z;\eta_0(t),\lambda)f_{\lambda}(\lambda;\omega,\rho)  \mathrm{d}z\mathrm{d}\lambda,\\\nonumber
	\end{align}
	where $\tilde{y} = \min\{y_1, \dots, y_K\}$, and $y_1, \dots, y_K$ are the observed degradation values. The corresponding CDF is expressed as
	\begin{align*}
		F_{\boldsymbol{Y}(t)}(y_1, \dots, y_K) = & \int_{0 }^{\infty}\int_{0}^{\tilde{y}}\prod_{k=1}^{K} F_{X_k(t)\mid\lambda}(y_k-z;\eta_k(t),\lambda)f_{Z(t)\mid\lambda}(z;\eta_0(t),\lambda)f_{\lambda}(\lambda;\omega,\rho)  \mathrm{d}z\mathrm{d}\lambda.
	\end{align*}
\end{proposition}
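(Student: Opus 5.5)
The plan is to condition first on the shared random effect $\lambda$ and then on the common shock level $Z(t)=z$. Under model (\ref{a21}), given these two quantities the total degradations are $Y_k(t)=X_k(t)+z$, and the $X_k(t)$ are conditionally independent. So the joint law of $\boldsymbol{Y}(t)$ is obtained by a simple shift of independent Gamma variables, followed by mixing over $z$ and $\lambda$.

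\textbf{Step 1 (change of variables given $\lambda$).} Fix $t$ and condition on $\lambda$. By assumption, $X_1(t),\dots,X_K(t),Z(t)$ are mutually independent given $\lambda$, with densities $f_{X_k(t)\mid\lambda}(\cdot;\eta_k(t),\lambda)$ and $f_{Z(t)\mid\lambda}(\cdot;\eta_0(t),\lambda)$. Hence the conditional joint density of $(X_1,\dots,X_K,Z)$ is the product of these densities.

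Consider the map $(x_1,\dots,x_K,z)\mapsto(y_1,\dots,y_K,z)$ with $y_k=x_k+z$. It is linear and triangular with unit diagonal, so its Jacobian determinant is $1$. The conditional joint density of $(Y_1,\dots,Y_K,Z)$ is therefore
\[
\prod_{k=1}^K f_{X_k(t)\mid\lambda}(y_k-z;\eta_k(t),\lambda)\, f_{Z(t)\mid\lambda}(z;\eta_0(t),\lambda).
\]
Each factor $f_{X_k(t)\mid\lambda}(y_k-z)$ vanishes unless $y_k-z>0$, and $f_{Z(t)\mid\lambda}(z)$ vanishes unless $z>0$. The support in $z$ is thus exactly $0<z<\tilde y=\min\{y_1,\dots,y_K\}$.

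\textbf{Step 2 (marginalising).} Integrating out $z$ over $(0,\tilde y)$ gives the conditional density of $\boldsymbol{Y}(t)$ given $\lambda$. Multiplying by $f_\lambda(\lambda;\omega,\rho)$ and integrating $\lambda$ over $(0,\infty)$ gives (\ref{a1}). The only technical point is the interchange of integrations, i.e.\ writing the marginal as an iterated integral in the order $\mathrm{d}z\,\mathrm{d}\lambda$. Since all integrands are nonnegative, Tonelli's theorem justifies this without further conditions.

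\textbf{Step 3 (CDF).} I would argue directly rather than integrate (\ref{a1}). By the tower property,
\[
F_{\boldsymbol{Y}(t)}(y_1,\dots,y_K)=E\bigl[P(X_k(t)\le y_k-Z(t),\ k=1,\dots,K\mid \lambda,Z(t))\bigr].
\]
By conditional independence, the inner probability equals $\prod_k F_{X_k(t)\mid\lambda}(y_k-Z(t))$. Each factor is zero whenever $Z(t)>y_k$, so the expectation over $Z(t)\mid\lambda$ reduces to an integral over $0<z<\tilde y$. Taking the expectation over $\lambda$ then gives the stated expression.

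I expect the main obstacle to be this support bookkeeping. One must justify restricting the $z$-range to $(0,\tilde y)$ in both formulas, noting that the CDF upper limit is $\tilde y$ because the factors $F_{X_k}(y_k-z)$ vanish, not because of a density support. One must also handle the measure-theoretic interchange of integrals, which nonnegativity and Tonelli settle.
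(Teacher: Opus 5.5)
Your argument is correct and follows the paper's route. The paper states Proposition~\ref{pro2} without a separate proof, but its proof of Proposition~\ref{pro3} is exactly your Step 3: condition on $\lambda$ and then on $Z(t)=z$, factor into $\prod_{k}F_{X_k(t)\mid\lambda}(\tau_k-z)$ by conditional independence, and truncate the $z$-range at the minimum threshold; your unit-Jacobian change of variables and the Tonelli remark supply the density version that the paper leaves implicit.
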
 
Proposition \ref{pro2} gives the joint PDF and CDF of $Y_1(t), \dots, Y_K(t)$, which can be used to derive the ML estimates of the model parameters.
\section{Reliability analysis}\label{section3}
In engineering practice, the system lifetime is one of the most important quantities of interest in reliability analysis and maintenance planning. For degradation systems, failure is commonly defined through performance thresholds. Specifically, let $\tau_k$ denote the failure threshold associated with the $k$th PC. The $k$th PC is considered to fail when its degradation level first reaches or exceeds $\tau_k$. Furthermore, since the satisfactory operation of the system requires all PCs to remain within their acceptable performance ranges, the system is regarded as failed once any of the $K$ PCs reaches its corresponding failure threshold (\cite{lu2021general}; \cite{zhuang2024remaining}).

Based on the above failure criterion, the unit remains operational only if all performance characteristics stay below their corresponding failure thresholds. Therefore, the reliability function of the unit can be expressed as the joint probability that no PC reaches its threshold by time $t$. The following proposition provides the reliability function under the proposed model.

\begin{proposition}\label{pro3}
	The reliability function of the unit is given by
	\[
	R(t)
	=
	\int_0^\infty\int_0^{\tilde{\tau}}
	\prod_{k=1}^{K}
	F_{X_k(t)\mid\lambda}(\tau_k-z)
	f_{Z(t)\mid\lambda}(z)
	f_\lambda(\lambda;\rho,\omega)
	\,dz\,d\lambda,
	\]
	where
	\[
	\tilde{\tau}=\min\left\lbrace\tau_1,\tau_2,\dots,\tau_K \right\rbrace ,
	\]
	and
	\[
	F_{X_k(t)\mid\lambda}(\tau_k-z)
	=
	\int_0^{\tau_k-z}
	\frac{\lambda^{\eta_k(t)}}{\Gamma(\eta_k(t))}
	x_k^{\eta_k(t)-1}
	\exp(-\lambda x_k)\,dx_k .
	\]
\end{proposition}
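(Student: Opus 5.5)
The plan is to write $R(t)$ as a joint probability and then condition twice: first on the shared random effect $\lambda$, then on the common shock level $Z(t)$. Each $Y_k(t)$ is nondecreasing in $t$, because $X_k$ and $Z$ are Gamma processes with nonnegative increments. Hence the first passage time of $Y_k$ over $\tau_k$ exceeds $t$ exactly when $Y_k(t)<\tau_k$. The system fails as soon as any PC crosses its threshold, so
\[
R(t)=P\{Y_1(t)<\tau_1,\dots,Y_K(t)<\tau_K\}.
\]
This step uses only monotonicity of the sample paths, which follows from the Gamma process definition in Section~\ref{section2}.

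Next I condition on $\lambda$ and then on $Z(t)=z$. Given $\lambda$, the processes $X_1,\dots,X_K,Z$ are mutually independent by the model assumption in (\ref{a21}). Therefore, for fixed $z$,
\[
P\{X_k(t)<\tau_k-z,\ k=1,\dots,K\mid \lambda, Z(t)=z\}=\prod_{k=1}^K F_{X_k(t)\mid\lambda}(\tau_k-z).
\]
Since $X_k(t)\ge 0$ almost surely, this product vanishes whenever $z\ge\tau_k$ for some $k$. The integration in $z$ can therefore be truncated to $[0,\tilde\tau]$ with $\tilde\tau=\min_k\tau_k$. I then integrate against $f_{Z(t)\mid\lambda}(z)$ over $z$ and against $f_\lambda(\lambda;\omega,\rho)$ over $\lambda$. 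Tonelli's theorem justifies the iterated integrals, because the integrand is nonnegative.

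Equivalently, the result follows by evaluating the joint CDF of Proposition~\ref{pro2} at $(\tau_1,\dots,\tau_K)$. The displayed form of $F_{X_k(t)\mid\lambda}$ is just the Gamma$(\eta_k(t),\lambda)$ CDF, read off from the PDF $f_{Ga}$ given earlier.

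The only delicate point I anticipate is the boundary convention, namely whether failure means reaching or exceeding $\tau_k$. Because $Y_k(t)$ has a continuous distribution, $P\{Y_k(t)=\tau_k\}=0$, so strict and non-strict inequalities give the same probability. The monotonicity argument also needs $\{T>t\}=\{Y_k(t)<\tau_k\ \forall k\}$ up to null sets, which is immediate from nondecreasing paths.
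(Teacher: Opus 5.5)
Your proposal is correct and follows the same route as the paper. Both write $R(t)$ as $P\{Y_k(t)<\tau_k,\ k=1,\dots,K\}$, condition on $\lambda$ and then on $Z(t)=z$, truncate the $z$-integral to $[0,\tilde\tau]$, and factor into $\prod_{k}F_{X_k(t)\mid\lambda}(\tau_k-z)$ by conditional independence; your additional justifications (path monotonicity for the first equality, $X_k(t)\ge 0$ for the truncation, Tonelli for the iterated integrals) just make explicit steps the paper leaves implicit.
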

\begin{proof}
	According to the failure criterion, the unit survives up to time $t$
	if and only if all PCs remain below their corresponding failure
	thresholds. Hence,
	\[ \begin{aligned} R(t)
	&=P(Y_1(t)<\tau_1,Y_2(t)<\tau_2,\dots,Y_K(t)<\tau_K)\\ &=P(X_1(t)+Z(t)<\tau_1,X_2(t)+Z(t)<\tau_2,\dots,X_K(t)+Z(t)<\tau_K)\\ &=\int_{0}^{\infty}P(X_1(t)+Z(t)<\tau_1,X_2(t)+Z(t)<\tau_2,\dots,X_K(t)+Z(t)<\tau_K|\lambda)f(\lambda;\omega,\rho)\mathrm{d}\lambda\\ &=\int_{0}^{\infty}\int_{0}^{\infty}P(X_1(t)<\tau_1-z,X_2(t)<\tau_2-z,\dots,X_K(t)<\tau_K-z|\lambda,Z(t)=z)f_{Z(t)|\lambda}(z)\\ 
	&\quad \times f(\lambda;\omega,\rho)\mathrm{d}z\mathrm{d}\lambda\\ &=\int_{0}^{\infty}\int_{0}^{\tilde{\tau}}P(X_1(t)<\tau_1-z,X_2(t)<\tau_2-z,\dots,X_K(t)<\tau_K-z|\lambda,Z(t)=z)f_{Z(t)|\lambda}(z)\\ 
	&\quad \times f(\lambda;\omega,\rho)\mathrm{d}z\mathrm{d}\lambda\\ &=\int_{0}^{\infty}\int_{0}^{\tilde{\tau}}\prod_{k=1}^{K}P(X_k(t)<\tau_k-z|\lambda)f_{Z(t)|\lambda}(z)f(\lambda;\omega,\rho)\mathrm{d}z\mathrm{d}\lambda\\ &=\int_{0}^{\infty}\int_{0}^{\tilde{\tau}}\prod_{k=1}^{K}F_{X_k(t)\mid\lambda}(\tau_k-z)f_{Z(t)|\lambda}(z)f(\lambda;\omega,\rho)\mathrm{d}z\mathrm{d}\lambda. \end{aligned}
	\]
		where
	\[
	\tilde{\tau}=\min\left\lbrace\tau_1,\tau_2,\dots,\tau_K \right\rbrace.
	\]
	The conditional distribution function $F_{X_k(t)\mid\lambda}(\tau_k-z)$ is given by
	\[
	\begin{aligned}
		F_{X_k(t)\mid\lambda}(\tau_k-z)=\frac{\gamma(\eta_k(t),\lambda(\tau_k-z))}{\Gamma(\eta_k(t))},
	\end{aligned}
	\]
	where $\gamma(s,x)=\int_{0}^{x}t^{s-1}e^{-t}~\mathrm{d} t$ denotes the lower incomplete Gamma function.
\end{proof}
	Since the integral in the reliability function is mathematically intractable to solve analytically, numerical approximation methods are required. In this paper, a hybrid Gauss quadrature method is employed to approximate this two-dimensional integral. Specifically, the Gauss-Legendre quadrature method is applied to the finite interval integral regarding the shared degradation level $z$. Meanwhile, the generalized Gauss-Laguerre quadrature method is utilized for the semi-infinite integral concerning the random effect factor $\lambda$. Compared with the conventional Monte Carlo integration, this deterministic quadrature approach offers superior stability and reproducibility, and it typically achieves comparable approximation accuracy with a significantly reduced computational budget.
	
    According to Proposition \ref{pro3}, the reliability function of the unit can be expressed as
	\begin{equation*}
		\begin{aligned}
			R(t) = {} & \frac{\rho^\omega}{\Gamma(\omega)\Gamma\{\eta_0(t)\}} \int_{0}^{\tilde{\tau}} z^{\eta_0(t)-1} \int_{0}^{\infty} \lambda^{\omega+\eta_0(t)-1} \exp\{-(\rho + z)\lambda\} \\
			& \times \prod_{k=1}^{K}F_{X_k(t)\mid\lambda}(\tau_k-z)  \, \mathrm{d}\lambda \, \mathrm{d}z\\
		    & = \frac{\rho^\omega}{\Gamma(\omega)\Gamma\{\eta_0(t)\}} \int_{0}^{\tilde{\tau}} \frac{z^{\eta_0(t)-1}}{(\rho + z)^{\omega+\eta_0(t)}} \\
			& \times \int_{0}^{\infty} s^{\omega+\eta_0(t)-1} \exp(-s) \prod_{k=1}^{K} P_{\Gamma} \left\{ \eta_k(t), \frac{s(\tau_k - z)}{\rho + z} \right\} \, \mathrm{d}s \, \mathrm{d}z.
		\end{aligned}
	\end{equation*}
	where $s=(\rho + z)\lambda$ and $P_\Gamma(a,x)$ denotes the regularized lower incomplete Gamma function with shape parameter $a$.
	
	For the outer integral defined over the finite interval $[0,\tilde{\tau}]$, we employ Gauss–Legendre quadrature. Let $u_q$ and $w_q$, $q=1,\ldots,l_z$, denote the nodes and weights of the $l_z$-point Gauss–Legendre rule on the standard interval $[-1,1]$. Applying the affine change of variables $z_q=\frac{\tilde{\tau}}{2}(u_q+1),\quad q=1,\ldots,l_z,$ which maps $[-1,1]$ onto $[0,\tilde{\tau}]$, the integral can be approximated by 
	\begin{eqnarray*}
		\int_{0}^{\tilde{\tau}} g(z)\,\mathrm{d}z \approx \frac{\tilde{\tau}}{2}\sum_{q=1}^{l_z} w_q\, g(z_q).
	\end{eqnarray*}
	
	For the inner integral defined over the interval $[0,\infty)$, let $a(t)=\omega+\eta_0(t)-1$. Then, the corresponding weight function for the generalized Gauss–Laguerre quadrature is $s^{a(t)}e^{-s}$. Let $s_r(t)$ and $W_r(t)$, $r=1,\ldots,l_\lambda$, denote the quadrature nodes and weights associated with this weight function, respectively. According to the generalized Gauss–Laguerre quadrature rule, the integral can be approximated as
	\begin{equation*}
		\int_{0}^{\infty} s^{\alpha(t)}e^{-s} h(s) \, \mathrm{d}s \approx \sum_{r=1}^{l_s} W_r h(s_r).
	\end{equation*}
	
	Combining the Gauss-Legendre approximation for $z$ and the generalized Gauss-Laguerre approximation for $s$, the reliability function can be approximated as
	\begin{equation}
		\begin{aligned}\label{a2}
			R(t) \approx {} & \frac{\rho^\omega}{\Gamma(\omega)\Gamma\{\eta_0(t)\}} \frac{\tilde{\tau}}{2} \sum_{q=1}^{l_z} w_q \frac{z_q^{\eta_0(t)-1}}{(\rho + z_q)^{\omega+\eta_0(t)}} \\
			& \times \sum_{r=1}^{l_s} W_r \prod_{k=1}^{K} P_{\Gamma} \left\{ \eta_k(t), \frac{s_r(\tau_k - z_q)}{\rho + z_q} \right\}.
		\end{aligned}
	\end{equation}

\section{Statistical inference}\label{section4}
    	This section considers the parameterized multivariate Gamma degradation model in which the shape functions take specific parametric forms. In the proposed additive degradation framework, $\eta_k(t)$ characterizes the accumulation pattern of the intrinsic degradation process of the $k$th PC, whereas $\eta_0(t)$ describes the cumulative effect of the common shock-induced degradation process. Therefore, the parametric specification of the shape functions determines not only the temporal evolution of the marginal degradation paths, but also the dependence structure among different PCs induced by the common shock process.
    	
    	For the $k$th PC, assume that the shape function of the intrinsic degradation process is given by $\eta_k(t)=\eta_k(t;\alpha_k,\beta_k)$, where $\alpha_k$ and $\beta_k$ are unknown parameters, $k=1,\ldots,K$. Commonly used specifications include the power-law form $\eta_k(t;\alpha_k,\beta_k)=\beta_k t^{\alpha_k}$ and the log-linear form $\eta_k(t;\alpha_k,\beta_k)=\beta_k{\exp(\alpha_k t-1)}$. In these formulations, $\alpha_k$ usually controls the nonlinear temporal evolution of the degradation process, while $\beta_k$ reflects the degradation accumulation level of the corresponding PC.
    	
    	However, under the additive model $Y_k(t)=X_k(t)+Z(t)$, the observed degradation is jointly driven by the intrinsic degradation process $X_k(t)$ and the common shock-induced degradation process $Z(t)$. Hence, the parametric forms of $\eta_k(t)$ and $\eta_0(t)$ should not be specified independently without additional restrictions. In particular, if the two components share the same temporal trend and both contain independent scale parameters, parameter confounding or weak identifiability may arise. For example, if $\eta_k(t)=\beta_k t^{\alpha}$ and $\eta_0(t)=\beta_0 t^{\alpha}$, then, conditional on $\lambda$, the marginal distribution of $Y_k(t)$ involves the accumulated shape effect only through $(\beta_k+\beta_0)t^{\alpha}$. In this case, the marginal degradation information mainly reflects the combined effect of $\beta_k+\beta_0$, making it difficult to stably distinguish the PC-specific degradation scale $\beta_k$ from the common shock-induced degradation scale $\beta_0$. Although the dependence structure in the multivariate joint distribution provides additional information about the common shock process, over-parameterization may still lead to unstable estimation when the sample size is limited and the temporal trends of the two components are highly similar.
  
    	To avoid such parameter redundancy and improve model interpretability, a more parsimonious specification is adopted for the common shock-induced degradation process by setting $\eta_0(t)=\eta_0(t;\alpha_0)$, without introducing an additional independent scale parameter $\beta_0$. This specification separates the common shock effect from the PC-specific intrinsic degradation scales: $\eta_0(t;\alpha_0)$ characterizes the accumulation pattern of common shock damage over time, whereas $\eta_k(t;\alpha_k,\beta_k)$ describes the PC-specific degradation rate and nonlinear temporal behavior. Consequently, the proposed specification reduces parameter confounding between the common degradation component and the marginal degradation components, while preserving the ability of the model to capture dependence among PCs with improved estimability and physical interpretability.
    	\subsection{Data and likelihood}\label{p1}
	   	Assume that $n$ units are tested in a degradation experiment. For the $i$-th unit, the degradation levels of $K$ performance characteristics (PCs) are measured at $m_i$ time points, denoted by $\boldsymbol{T}_i = (t_{i,1},\ldots,t_{i,m_i})^{\top}$, and $\boldsymbol{Y}_{i,k} = (Y_{i,k,1},\ldots,Y_{i,k,m_i})^{\top}$ for $i=1,\ldots,n,\quad k=1,2,\dots,K$.
	   For notational convenience, set $t_{i,0}=0$ and $Y_{i,k,0}=0$. The degradation increment of the $k$th PC over the interval $(t_{i,j-1},t_{i,j}]$ is defined as
	   $\Delta Y_{i,k,j} = Y_{i,k,j}-Y_{i,k,j-1}, \quad j=1,\ldots,m_i$.
	   	Furthermore, let $\Delta \boldsymbol{Y}_{i,:,j} = (\Delta Y_{i,1,j},\ldots,\Delta Y_{i,K,j})^{\top}$
	   	denote the vector of degradation increments over the $j$th interval for the $i$-th unit. The model parameter vector	is defined as $\boldsymbol{\Theta}=\{\alpha_k,\beta_k,\alpha_0,\rho,\omega\}^{\top}$.
	   	Based on (\ref{a1}), conditional on the unit-specific realization $\lambda_i$ of the shared random-effect factor, the density of the increment vector $\Delta \boldsymbol{Y}_{i,:,j}$ can be expressed as
	   	\begin{equation*}
	   		p(\Delta \boldsymbol{y}_{i,:,j}\mid \lambda_i;\boldsymbol{\Theta}) = \int_{0}^{\Delta \tilde{y}_{i,j}} f_G(\Delta z_{i,j};\lambda_i,\Delta\eta_{i,0,j}) \prod_{k=1}^{K} f_G(\Delta y_{i,k,j}-\Delta z_{i,j};\lambda_i,\Delta\eta_{i,k,j}) \,\mathrm{d}\Delta z_{i,j},
	   	\end{equation*}
	   	where
	   	\begin{equation*}
	   		\Delta \tilde{y}_{i,j} = \min\{\Delta y_{i,1,j},\Delta y_{i,2,j},\ldots,\Delta y_{i,K,j}\}.
	   	\end{equation*}
	   	Here, $f_G(x;\lambda,a)$ denotes the Gamma density with rate parameter $\lambda$ and shape parameter $a$, namely
	   	\begin{equation*}
	   		f_G(x;\lambda,a) = \frac{\lambda^{a}}{\Gamma(a)} x^{a-1}\exp(-\lambda x), \qquad x>0.
	   	\end{equation*}
	   	The shape increment of the intrinsic degradation process for the $k$th PC over $(t_{i,j-1},t_{i,j}]$ is defined as
	   	\begin{equation*}
	   		\Delta\eta_{i,k,j} = \eta_k(t_{i,j};\alpha_k,\beta_k) - \eta_k(t_{i,j-1};\alpha_k,\beta_k), \qquad k=1,\ldots,K,
	   	\end{equation*}
	   	and the shape increment of the common shock-induced degradation process is defined as
	   	\begin{equation*}
	   		\Delta\eta_{i,0,j} = \eta_0(t_{i,j};\alpha_0) - \eta_0(t_{i,j-1};\alpha_0).
	   	\end{equation*}
	   	
	   	Since the same random-effect factor $\lambda_i$ is shared by all degradation increments of the $i$-th unit, the likelihood contribution of this unit is
	   	\begin{equation*}
	   		L_i(\boldsymbol{\Theta}) = \int_{0}^{\infty} \left[ \prod_{j=1}^{m_i} p(\Delta \boldsymbol{y}_{i,:,j}\mid \lambda_i;\boldsymbol{\Theta}) \right] f_{\lambda}(\lambda_i;\omega,\rho) \,\mathrm{d}\lambda_i .
	   	\end{equation*}
	   	Assuming that different units are independent, the full likelihood function for the observed degradation data is
	   	\begin{equation*}
	   		L(\boldsymbol{\Theta}) = \prod_{i=1}^{n} L_i(\boldsymbol{\Theta}).
	   	\end{equation*}
	   	Equivalently,
	   	\begin{equation*}
	   		L(\boldsymbol{\Theta}) = \prod_{i=1}^{n} \int_{0}^{\infty} \left[ \prod_{j=1}^{m_i} \int_{0}^{\Delta \tilde{y}_{i,j}} f_G(\Delta z_{i,j};\lambda_i,\Delta\eta_{i,0,j}) \prod_{k=1}^{K} f_G(\Delta y_{i,k,j}-\Delta z_{i,j};\lambda_i,\Delta\eta_{i,k,j}) \,\mathrm{d}\Delta z_{i,j} \right] f_{\lambda}(\lambda_i;\omega,\rho) \,\mathrm{d}\lambda_i .
	   	\end{equation*}
	   	The corresponding log-likelihood function is
	   	\begin{equation*}
	   		\ell(\boldsymbol{\Theta}) = \sum_{i=1}^{n} \log \left\{ \int_{0}^{\infty} \left[ \prod_{j=1}^{m_i} p(\Delta \boldsymbol{y}_{i,:,j}\mid \lambda_i;\boldsymbol{\Theta}) \right] f_{\lambda}(\lambda_i;\omega,\rho) \,\mathrm{d}\lambda_i \right\}.
	   	\end{equation*}
	   	The maximum likelihood estimator is obtained by
	   	\begin{equation*}
	   		\hat{\boldsymbol{\Theta}} = \arg\max_{\boldsymbol{\Theta}} \ell(\boldsymbol{\theta}).
	   	\end{equation*}
	   	However, the log-likelihood function involves nested integrals and is a complicated nonlinear function of $\boldsymbol{\Theta}$. Consequently, a closed-form solution for $\hat{\boldsymbol{\Theta}}$ is generally unavailable. To address this difficulty, an EM algorithm is developed to obtain the maximum likelihood estimates.
	\subsection{Point estimation by the EM algorithm}\label{p2}
	For the EM algorithm, the unobserved common shock-induced degradation ($Z_{i,j}=Z_i(t_{i,j})$) and the shared random effects ($\lambda_i$), ($i=1,\ldots,n$), ($j=1,\ldots,m_i$), are treated as latent variables. Define $\boldsymbol{Z}_i=(Z_{i,1},Z_{i,2},\ldots,Z_{i,m_i})^{\top}$.
		For notational convenience, set $Z_{i,0}=0$. Conditional on $\lambda_i$, the increment of the common shock-induced degradation process over the interval ($(t_{i,j-1},t_{i,j}]$) is $ \Delta Z_{i,j}=Z_{i,j}-Z_{i,j-1}$,
		and follows a Gamma distribution: $\Delta Z_{i,j}\mid \lambda_i \sim Ga(\Delta \eta_{i,0,j},\lambda_i)$.
		Furthermore, given $\boldsymbol{Z}_i$ and $\lambda_i$, the intrinsic degradation increment of the $k$-th PC satisfies $\Delta Y_{i,k,j}-\Delta Z_{i,j} \mid \boldsymbol{Z}_i,\lambda_i \sim Ga(\Delta \eta_{i,k,j},\lambda_i)$,
		with$0\leq \Delta Z_{i,j}\leq \Delta \tilde y_{i,j}$.
		Let $\mathbb{Y} = \{\boldsymbol{Y}_1,\boldsymbol{Y}_2,\dots,\boldsymbol{Y}_n\}$
		denote the observed degradation data,
		and let $\mathbb{Z} = \{\boldsymbol{Z}_1,\ldots,\boldsymbol{Z}_n\}$
		denote the latent common shock-induced degradation. 
		
	Then for given $\boldsymbol{Z}_i$ and $\lambda_i$, the conditional likelihood function is 
	\begin{align}\label{a18}
		\mathcal{L}(\boldsymbol{Y}_i\mid\boldsymbol{Z}_i,\lambda_i)&=\prod_{j=1}^{m_i}\prod_{k=1}^{K}f_G(\Delta y_{i,k,j}-\Delta z_{i,j};\Delta\eta_{i,k,j},\lambda_i)\nonumber\\
		&=\prod_{j=1}^{m_i}\prod_{k=1}^{K}\frac{\lambda_i^{\Delta\eta_{i,k,j}}}{\Gamma(\Delta\eta_{i,k,j})}\exp[-\lambda_i(\Delta y_{i,k,j}-\Delta z_{i,j})](y_{i,k,j}-\Delta z_{i,j})^{\eta_{i,k,j}-1}.
	\end{align}
	The PDF of $\boldsymbol{Z}_i$ conditional an $\lambda_i$ is given by
	\begin{align}\label{a19}
		\Psi(\boldsymbol{Z}\mid\lambda_i)&=\prod_{j=1}^{m_i}f_G(\Delta z_{i,j};\Delta\eta_{i,0,j},\lambda_i)\nonumber\\
		&=\prod_{j=1}^{m_i}\frac{\lambda_i^{\Delta\eta_{i,0,j}}}{\Gamma(\Delta\eta_{i,0,j})}\exp[-\lambda_i\Delta z_{i,j}]\Delta z_{i,j}^{\eta_{i,k,j}-1}.
	\end{align}
	The unconditional PDF of $\lambda_i$ is 
	\begin{align}\label{a20}
		\varphi(\lambda_i)=f_G(\lambda_i;\omega,\rho)=\frac{\rho^\omega}{\Gamma(\omega)}\exp(-\rho\lambda_i)\lambda_i^{\omega-1}.
	\end{align}
	Based on the complete data $\{\mathbb{Y},\mathbb{Z},\boldsymbol{\lambda}\}$, by combining \eqref{a18}, \eqref{a19} and \eqref{a20}, the complete data log-likelihood function is given by
	\begin{equation*}
		\begin{aligned}
			\ell_c(\boldsymbol{\Theta};\mathbb{Y},\mathbb{Z},\boldsymbol{\lambda}) &= \sum_{i=1}^{n} \bigg\{ \log f_{G}(\lambda_i;\omega,\rho) \\
			&\quad + \sum_{j=1}^{m_i} \bigg[ \log f_G(\Delta z_{i,j};\Delta\eta_{i,0,j},\lambda_i) + \sum_{k=1}^{K} \log f_G(\Delta y_{i,k,j}-\Delta z_{i,j}; \Delta\eta_{i,k,j},\lambda_i) \bigg] \bigg\}.
		\end{aligned}
	\end{equation*}
	
	The EM algorithm iteratively alternates between the E-step and the M-step to obtain the maximum likelihood estimates of the model parameters. Let $\boldsymbol{\Theta}^{(r)}$ denote the parameter estimate at the $r$th iteration. In the $(r+1)$th iteration, the E-step first computes the conditional expectation of the complete data log-likelihood, known as the $Q$-function:
	\begin{align}\label{Q-f}
		Q(\boldsymbol{\Theta} \mid \boldsymbol{\Theta}^{(r)}) = \mathbb{E} \left[ \ell_c(\boldsymbol{\Theta}; \mathbb{Y}, \mathbb{Z}, \boldsymbol{\lambda}) \mid \mathbb{Y}, \boldsymbol{\Theta}^{(r)}\right].
	\end{align}
	This expectation is taken with respect to the posterior distribution of the latent variables $(\mathbb{Z}, \boldsymbol{\lambda})$ given the observed data $\mathbb{Y}$ and the current parameter estimate $\boldsymbol{\Theta}^{(r)}$.
	
	To proceed the E-step, we need to calculate
	\begin{align}\label{a12}
		&\zeta_i=\mathbb{E}[\log(\lambda_i)\mid \boldsymbol{Y}_i,\boldsymbol{\Theta}^{(r)}],~\xi_{i}=\mathbb{E}[\lambda_i\mid \boldsymbol{Y}_i,\boldsymbol{\Theta}^{(r)}],~\pi_{i,j}=\mathbb{E}[\lambda_i\Delta Z_{i,j}\mid \boldsymbol{Y}_i,\boldsymbol{\Theta}^{(r)}],\nonumber\\
		&\sigma_{i,k,j}=\mathbb{E}[\log(\Delta y_{i,k,j}-\Delta Z_{i,j})\mid \boldsymbol{Y}_i,\boldsymbol{\Theta}^{(r)}],~ \textbf{and} ~\nu_{i,j}=\mathbb{E}[\log(\Delta Z_{i,j})\mid \boldsymbol{Y}_i,\boldsymbol{\Theta}^{(r)}].
	\end{align}
	
	When we obtain the $Q$-function in (\ref{Q-f}), the update of the optimal solution in the M-step is carried out as follows:
	\begin{align}\label{a9}
		\boldsymbol{\Theta}^{(r+1)}=\arg\max_{\boldsymbol{\Theta}}Q(\boldsymbol{\Theta} \mid \boldsymbol{\Theta}^{(r)}).
	\end{align}
	The M-step updates $\boldsymbol{\Theta}$ by maximizing $Q(\boldsymbol{\Theta}\mid \boldsymbol{\Theta}^{(r)})$. Differentiating the $Q$-function in (\ref{Q-f}) with respect to the components of $\boldsymbol{\Theta}$ and equating the derivatives to zero gives the following estimating equations:
	\begin{align}
		&\frac{\partial Q(\boldsymbol{\Theta}\mid\boldsymbol{\Theta}^{(r)})}{\partial \alpha_k} = \sum_{i=1}^{n} \sum_{j=1}^{m_i} \frac{\partial \Delta \eta_{i,k,j}}{\partial \alpha_k} \left\{ E[\log\lambda_i\mid\boldsymbol{Y}_i,\boldsymbol{\Theta}^{(r)}]-\psi(\Delta\eta_{i,k,j})+E[\log(\Delta Y_{i,k,j}-\Delta Z_{i,j})\mid\boldsymbol{Y}_i,\boldsymbol{\Theta}^{(r)}]\right\} = 0,\label{a6}\\
		&\frac{\partial Q(\boldsymbol{\Theta}\mid\boldsymbol{\Theta}^{(r)})}{\partial \beta_k} = \sum_{i=1}^{n} \sum_{j=1}^{m_i} \frac{\partial \Delta \eta_{i,k,j}}{\partial \beta_k} \left\{ E[\log\lambda_i\mid\boldsymbol{Y}_i,\boldsymbol{\Theta}^{(r)}]-\psi(\Delta\eta_{i,k,j})+E[\log(\Delta Y_{i,k,j}-\Delta Z_{i,j})\mid\boldsymbol{Y}_i,\boldsymbol{\Theta}^{(r)}]\right\} = 0,\label{a7}\\
		&\frac{\partial Q(\boldsymbol{\Theta}\mid\boldsymbol{\Theta}^{(r)})}{\partial \alpha_0} = \sum_{i=1}^{n} \sum_{j=1}^{m_i} \frac{\partial \Delta \eta_{i,0,j}}{\partial \alpha_0} \left\{ E[\log\lambda_i\mid\boldsymbol{Y}_i,\boldsymbol{\Theta}^{(r)}]-\psi(\Delta\eta_{i,0,j})+E[\log(\Delta Z_{i,j})\mid\boldsymbol{Y}_i,\boldsymbol{\Theta}^{(r)}]\right\} = 0,\label{a8}\\
		&\frac{\partial Q(\boldsymbol{\Theta}\mid\boldsymbol{\Theta}^{(r)})}{\partial \rho}=\sum_{i=1}^{n}\{\frac{\omega}{\rho}-E[\lambda_i\mid\boldsymbol{Y}_i,\boldsymbol{\Theta}^{(r)}]\}=0,\label{a3}\\
		&\frac{\partial Q(\boldsymbol{\Theta}\mid\boldsymbol{\Theta}^{(r)})}{\partial \omega}=\sum_{i=1}^{n}\{\log\rho-\psi(\omega)+E[\log\lambda_i\mid\boldsymbol{Y}_i,\boldsymbol{\Theta}^{(r)}]\}=0,\label{a4}
	\end{align}
	for $k=1,2,\dots,K$, where $\psi(\cdot)$ denotes the digamma function. 	From the estimating equation for $\rho$, we obtain
	\begin{equation*}
		\rho^{(r+1)} = \frac{n\omega^{(r+1)}}{\sum_{i=1}^{n} \mathbb{E}[\lambda_i\mid\boldsymbol{Y}_i,\boldsymbol{\Theta}^{(r)}]}.
	\end{equation*}
	Substituting this expression into the estimating equation for $\omega$ yields
	\begin{equation}\label{a5}
		\log\omega^{(r+1)} - \psi(\omega^{(r+1)}) = \log\left\{ \frac{1}{n} \sum_{i=1}^{n} \mathbb{E}[\lambda_i\mid\boldsymbol{Y}_i,\boldsymbol{\Theta}^{(r)}] \right\} - \frac{1}{n} \sum_{i=1}^{n} \mathbb{E}[\log\lambda_i\mid\boldsymbol{Y}_i,\boldsymbol{\Theta}^{(r)}].
	\end{equation}
	\cite{zaigraev2008estimation} showed that $\log(\omega)-\psi(\omega)$ is a strictly decreasing function ranging from $0$ to $\infty$, and we can
	verify $\log({1}/{n}\sum_{i=1}^{n}E[\lambda_i\mid\boldsymbol{Y}_i,\boldsymbol{\Theta}^{(r)}])-{1}/{n}\sum_{i=1}^{n}E[\log\lambda_i\mid\boldsymbol{Y}_i,\boldsymbol{\Theta}^{(r)}]>0$ via the Jensen's inequality. Thus the root of (\ref{a5}) is well-defined and unique.
	
	Here,  The expectations are taken with respect to the conditional distribution of the latent variables given $\boldsymbol{Y}_i$ and $\boldsymbol{\Theta}^{(r)}$, and are therefore treated as constants in the M-step.
	
	From the estimating equation for $\rho$, we obtain
	\begin{equation}
		\rho^{(r+1)} = \frac{n\omega^{(r+1)}}{\sum_{i=1}^{n} \mathbb{E}[\lambda_i\mid\boldsymbol{Y}_i,\boldsymbol{\Theta}^{(r)}]}.
	\end{equation}
	Substituting this expression into the estimating equation for $\omega$ yields
	\begin{equation}
		\log\omega^{(r+1)} - \psi(\omega^{(r+1)}) = \log\left\{ \frac{1}{n} \sum_{i=1}^{n} \mathbb{E}[\lambda_i\mid\boldsymbol{Y}_i,\boldsymbol{\Theta}^{(r)}] \right\} - \frac{1}{n} \sum_{i=1}^{n} \mathbb{E}[\log\lambda_i\mid\boldsymbol{Y}_i,\boldsymbol{\Theta}^{(r)}].
	\end{equation}
	The equation above can be solved numerically for $\omega^{(r+1)}$. Since $g(\omega)=\log\omega-\psi(\omega)$ is strictly decreasing on $(0,\infty)$, with $g(\omega)\to\infty$ as $\omega\to0^+$ and $g(\omega)\to0$ as $\omega\to\infty$, the solution is well defined and unique whenever the right-hand side is positive. This positivity follows from Jensen's inequality under the usual nondegenerate posterior distribution of $\lambda_i$. 
	
   In addition, the updates of $\alpha_k$, $\beta_k$, and $\alpha_0$ can be obtained by solving the estimating equations in \eqref{a6}-\eqref{a8}, where $k=1,2,\ldots,K$.
   
   	Given initial values $\boldsymbol{\Theta}^{(0)}$, the EM algorithm iterates until a predefined convergence criterion is satisfied. Specifically, convergence is achieved when $\left\|\boldsymbol{\Theta}^{(r+1)} - \boldsymbol{\Theta}^{(r)}\right\| < \epsilon_1$, where $\left\|\cdot\right\|$ represents the absolute difference and $\epsilon$ is a preset tolerance. The converged value is taken as the maximum likelihood estimate of $\boldsymbol{\Theta}$. The implementation of the proposed EM algorithm is summarized in Algorithm~\ref{alg:EM}.
%
%
%
%
%
   	\begin{algorithm}
   		\caption{Implementation of the proposed EM algorithm.}
   		\label{alg:EM}
   		\begin{algorithmic}
   			\Statex \textbf{Input:} $\mathbb{Y}$, $\boldsymbol{\Theta}^{(0)}$, and  $\epsilon_1$.
   			\Statex \textbf{Output:} 
   			$\widehat{\boldsymbol{\Theta}}
   			=
   			(\hat{\alpha}_k,\hat{\beta}_k,\hat{\alpha}_0,\hat{\rho},\hat{\omega})^{\top}$,~k=1,2,\dots,K.
   			\State Set $r=0$.
   			\Repeat
   			\Statex \textbf{E-step:}
   			\State Compute the required conditional expectations
   			\[
   			\mathbb{E}[\log(\lambda_i)\mid \boldsymbol{Y}_i,\boldsymbol{\Theta}^{(r)}],
   			\quad
   			\mathbb{E}[\lambda_i\mid \boldsymbol{Y}_i,\boldsymbol{\Theta}^{(r)}],
   			\quad
   			\mathbb{E}[\lambda_i\Delta Z_{i,j}\mid \boldsymbol{Y}_i,\boldsymbol{\Theta}^{(r)}],
   			\]
   			\[
   			\mathbb{E}[\log(\Delta y_{i,k,j}-\Delta Z_{i,j})\mid \boldsymbol{Y}_i,\boldsymbol{\Theta}^{(r)}],
   			\quad
   			\mathbb{E}[\log(\Delta Z_{i,j})\mid \boldsymbol{Y}_i,\boldsymbol{\Theta}^{(r)}].
   			\]
   			\Statex \textbf{M-step:}
   			\State Update $\boldsymbol{\Theta}^{(r+1)}$ by solving the estimating equations in \eqref{a6}-\eqref{a5}.
   			\State Set $r=r+1$.
   			\Until{$\left\|\boldsymbol{\Theta}^{(r)}-\boldsymbol{\Theta}^{(r-1)}\right\|<\epsilon_1$}
   			\State Return $\widehat{\boldsymbol{\Theta}}=\boldsymbol{\Theta}^{(r)}$.
   		\end{algorithmic}
   	\end{algorithm}
   	\subsection{Determine initial parameter estimation values}\label{p3}
   	The choice of initial parameter values is important for implementing the EM algorithm, as it can affect both the convergence speed and the numerical stability of the final estimates. To obtain reasonable initial values, a stepwise initialization strategy is adopted in this study. First, the shared random effect is temporarily treated as a fixed but unknown parameter and is preliminarily estimated from the corresponding degradation data. Then, the hyperparameters of the shared random-effect distribution are estimated using the method of moments. Finally, the parameters $\alpha$ and $\beta$ in the degradation shape functions are initialized by minimizing a weighted mean squared error criterion.
   	
   	Specifically, we first compute the sample mean and sample variance of the degradation increments for the $i$th unit:
   	\begin{align*}
   		&\Delta \bar{y}_i=\frac{1}{Km_i}\sum_{k=1}^{K}\sum_{j=1}^{m_i}\Delta y_{i,k,j}, ~\text{and}\\
   		&\Delta s_i^2=\frac{1}{Km_i-1}\sum_{k=1}^{K}\sum_{j=1}^{m_i}(\Delta y_{i,k,j}-\Delta \bar{y}_i)^2.
   	\end{align*}
   	
   	According to \ref{a10} and the principles of moment estimation, we calculate the estimate for $\hat{\lambda}_i,~i=1,2,\dots,n$
   	\begin{align*}
   		\hat{\lambda}_i=\frac{\Delta \bar{y}_i}{\Delta s_i^2}.
   	\end{align*}
   	Based on the preliminary estimates $\hat{\lambda}_i$, $i=1,\ldots,n$, obtained from the $n$ units, we fit a Gamma distribution $Ga(\omega,\rho)$ to $\{\hat{\lambda}_i,~i=1,2,\dots,n\}$ and obtain the initial hyperparameter values $\omega^{(0)}$ and $\rho^{(0)}$ as
   	\begin{equation*}
   		\omega^{(0)} = \frac{\bar{\lambda}^{\,2}}{s_{\hat{\lambda}}^2}, \qquad \rho^{(0)} = \frac{\bar{\lambda}}{s_{\hat{\lambda}}^2},
   	\end{equation*}
   	where
   	\begin{equation*}
   		\bar{\lambda} = \frac{1}{n}\sum_{i=1}^{n}\hat{\lambda}_i, \qquad s_{\hat{\lambda}}^2 = \frac{1}{n-1} \sum_{i=1}^{n} \left( \hat{\lambda}_i-\bar{\lambda} \right)^2 .
   	\end{equation*}
   	
   	Since $E[\Delta Y_{i,k,j}]$ represents the mean degradation increment of the $k$-th PC for the $i$-th unit over the observation interval $(t_{i,j-1},t_{i,j}]$, the initial values of the parameters in the degradation shape functions can be obtained by minimizing the weighted mean squared error between $E[\Delta Y_{i,k,j}]$ and the observed degradation increment $\Delta y_{i,k,j}$. Given the initial estimates $\rho^{(0)}$ and $\omega^{(0)}$, the initial estimate of
   	\begin{equation*}
   		\boldsymbol{\Psi} = (\alpha_1,\beta_1,\ldots,\alpha_K,\beta_K,\alpha_0)^{\top}
   	\end{equation*}
   	is obtained by solving the following optimization problem:
   	\begin{equation*}
   		\boldsymbol{\Psi}^{(0)} = \arg\min_{\boldsymbol{\Psi}} \sum_{i=1}^{n} \sum_{j=1}^{m_i} \sum_{k=1}^{K} \frac{ \left\{ E[\Delta Y_{i,k,j}] - \Delta y_{i,k,j} \right\}^2 }{ \operatorname{Var}[\Delta Y_{i,k,j}] } .
   	\end{equation*}
   	\subsection{Variational inference for computing conditional expectations}\label{p4}
   	
   	In the E-step, the conditional expectations required by the EM algorithm are computed with respect to the posterior distribution of the latent variables $(\boldsymbol{Z}_i,{\lambda}_i)$, namely
   	\begin{equation*}
   		p(\boldsymbol{Z}_i,{\lambda}_i\mid \boldsymbol{Y}_i;\boldsymbol{\Theta}^{(r)}).
   	\end{equation*}
   	For notational simplicity, the dependence on the current parameter estimate $\boldsymbol{\Theta}^{(r)}$ is omitted hereafter. By Bayes' theorem, the joint density of the observed data, the latent common shock-induced degradation paths, and the shared random effects can be factorized as
   	\begin{equation*}
   		\mathbb{W}(\boldsymbol{Y}_i,\boldsymbol{Z}_i,{\lambda}_i) = \mathcal{L}(\boldsymbol{Y}_i\mid\boldsymbol{Z}_i,\lambda_i)\Psi(\boldsymbol{Z}_i\mid {\lambda}_i) \varphi({\lambda}_i).
   	\end{equation*}
   	Therefore, the joint posterior distribution of $\boldsymbol{Z}_i$ and ${\lambda}_i$ given $\boldsymbol{Y}_i$ is given by
   	\begin{equation}\label{a11}
   		\mathbb{W}(\boldsymbol{Z}_i,{\lambda}_i\mid\boldsymbol{Y}_i) = \frac{ \mathbb{W}(\boldsymbol{Y}_i,\boldsymbol{Z}_i,{\lambda}_i) }{ \int\!\!\int \mathbb{W}(\boldsymbol{Y}_i,\boldsymbol{Z}_i,\boldsymbol{\lambda}) \,\mathrm{d}\boldsymbol{Z}_i\,\mathrm{d}{\lambda}_i } .
   	\end{equation}
   	This posterior distribution is analytically intractable, and hence the conditional expectations in \eqref{a12} are not available in closed form.
   	
   	To address the analytical intractability of the posterior distribution, variational inference is adopted to approximate the true posterior distribution $\mathbb{W}(\boldsymbol{Z}_i,{\lambda}_i\mid\boldsymbol{Y}_i)$ by a factorized variational distribution, that is
   		\begin{equation}
   			\mathbb{W}(\boldsymbol{Z}_i,{\lambda}_i\mid\boldsymbol{Y}_i) \approx J(\boldsymbol{Z}_i\mid\boldsymbol{Y}_i)q({\lambda}_i\mid\boldsymbol{Y}_i).
   		\end{equation}
   		The optimal variational densities $J^*(\boldsymbol{Z}_i\mid\boldsymbol{Y}_i)$ and $q^*({\lambda}_i\mid\boldsymbol{Y}_i)$ are then obtained by minimizing the following Kullback-Leibler divergence
   		\begin{align}\label{a13}
   			D_{\mathrm{KL}}(J(\boldsymbol{Z}_i\mid\boldsymbol{Y}_i)q({\lambda}_i\mid\boldsymbol{Y}_i)\,\big\|\,\mathbb{W}(\boldsymbol{Z}_i,{\lambda}_i\mid\boldsymbol{Y}_i))=\int\!\!\int J(\boldsymbol{Z}_i\mid\boldsymbol{Y}_i)q({\lambda}_i\mid\boldsymbol{Y}_i)\log(\frac{J(\boldsymbol{Z}_i\mid\boldsymbol{Y}_i)q({\lambda}_i\mid\boldsymbol{Y}_i)}{\mathbb{W}(\boldsymbol{Z}_i,{\lambda}_i\mid\boldsymbol{Y}_i)})\,\mathrm{d}\boldsymbol{Z}_i\mathrm{d}{\lambda}_i.
   		\end{align}
   		By combining \eqref{a11} and \eqref{a13}, we have
   		\begin{align*}
   			&\quad\log(\int\!\!\int \mathbb{W}(\boldsymbol{Z}_i,{\lambda}_i,\boldsymbol{Y}_i)\,\mathrm{d}\boldsymbol{Z}_i\mathrm{d}{\lambda}_i)\\
   			&=\log\frac{\mathbb{W}(\boldsymbol{Z}_i,{\lambda}_i,\boldsymbol{Y}_i)}{	\mathbb{W}(\boldsymbol{Z}_i,{\lambda}_i\mid\boldsymbol{Y}_i)}\\
   			&=\int\!\!\int J(\boldsymbol{Z}_i\mid\boldsymbol{Y}_i)q({\lambda}_i\mid\boldsymbol{Y}_i)[\log \mathbb{W}(\boldsymbol{Z}_i,{\lambda}_i,\boldsymbol{Y}_i)-\log \mathbb{W}(\boldsymbol{Z}_i,{\lambda}_i\mid\boldsymbol{Y}_i)]\,\mathrm{d}\boldsymbol{Z}_i\mathrm{d}{\lambda}_i\\
   			&=\int\!\!\int J(\boldsymbol{Z}_i\mid\boldsymbol{Y}_i)q({\lambda}_i\mid\boldsymbol{Y}_i)\log \mathbb{W}(\boldsymbol{Z}_i,{\lambda}_i,\boldsymbol{Y}_i)\,\mathrm{d}\boldsymbol{Z}_i\mathrm{d}{\lambda}_i-\int\!\!\int J(\boldsymbol{Z}_i\mid\boldsymbol{Y}_i)q({\lambda}_i\mid\boldsymbol{Y}_i)\log \mathbb{W}(\boldsymbol{Z}_i,{\lambda}_i\mid\boldsymbol{Y}_i)\,\mathrm{d}\boldsymbol{Z}_i\mathrm{d}{\lambda}_i\\
   			&=\int\!\!\int J(\boldsymbol{Z}_i\mid\boldsymbol{Y}_i)q({\lambda}_i\mid\boldsymbol{Y}_i)\log \mathbb{W}(\boldsymbol{Z}_i,{\lambda}_i,\boldsymbol{Y}_i)\,\mathrm{d}\boldsymbol{Z}_i\mathrm{d}{\lambda}_i-D_{\mathrm{KL}}(J(\boldsymbol{Z}_i\mid\boldsymbol{Y}_i)q({\lambda}_i\mid\boldsymbol{Y}_i)\,\big\|\,\mathbb{W}(\boldsymbol{Z}_i,{\lambda}_i,\boldsymbol{Y}_i))\\
   			&\quad-\left\lbrace \int\!\!\int J(\boldsymbol{Z}_i\mid\boldsymbol{Y}_i)q({\lambda}_i\mid\boldsymbol{Y}_i)\log J(\boldsymbol{Z}_i\mid\boldsymbol{Y}_i)q({\lambda}_i\mid\boldsymbol{Y}_i)\,\mathrm{d}\boldsymbol{Z}_i\mathrm{d}{\lambda}_i-D_{\mathrm{KL}}(J(\boldsymbol{Z}_i\mid\boldsymbol{Y}_i)q({\lambda}_i\mid\boldsymbol{Y}_i)\,\big\|\,\mathbb{W}(\boldsymbol{Z}_i,{\lambda}_i\mid\boldsymbol{Y}_i))\right\rbrace \\
   			&=D_{\mathrm{KL}}(J(\boldsymbol{Z}_i\mid\boldsymbol{Y}_i)q({\lambda}_i\mid\boldsymbol{Y}_i)\,\big\|\,\mathbb{W}(\boldsymbol{Z}_i,{\lambda}_i\mid\boldsymbol{Y}_i))+\mathcal{L}(J,q),
   		\end{align*}
   		where $\mathcal{L}(J,q)=\int\!\!\int J(\boldsymbol{Z}_i\mid\boldsymbol{Y}_i)q({\lambda}_i\mid\boldsymbol{Y}_i)\log\frac{ \mathbb{W}(\boldsymbol{Z}_i,{\lambda}_i,\boldsymbol{Y}_i)}{J(\boldsymbol{Z}_i\mid\boldsymbol{Y}_i)q({\lambda}_i\mid\boldsymbol{Y}_i)}\,\mathrm{d}\boldsymbol{Z}_i\mathrm{d}{\lambda}_i$. 
   		Noting that $
   				\log\left( \int\!\!\int \mathbb{W}(\boldsymbol{Z}_i,{\lambda}_i,\boldsymbol{Y}_i) \,\mathrm{d}\boldsymbol{Z}_i\mathrm{d}{\lambda}_i\right)$
   			depends only on the observed data $\boldsymbol{Y}_i$ and is independent of the variational densities $J(\boldsymbol{Z}_i\mid\boldsymbol{Y}_i)$ and $q({\lambda}_i\mid\boldsymbol{Y}_i)$, it can be treated as a constant. Therefore, minimizing the Kullback--Leibler divergence in \eqref{a13} is equivalent to maximizing the evidence lower bound $\mathcal{L}(J,q)$. Further, $q({\lambda}_i\mid\boldsymbol{Y}_i)$ is fixed, the objective function can be written as
   			\begin{align*}
   				\mathcal{L}(J,q)&=\int\!\!\int J(\boldsymbol{Z}_i\mid\boldsymbol{Y}_i)q({\lambda}_i\mid\boldsymbol{Y}_i)[\log \mathbb{W}(\boldsymbol{Z}_i,{\lambda}_i,\boldsymbol{Y}_i)-\log J(\boldsymbol{Z}_i\mid\boldsymbol{Y}_i)-\log q({\lambda}_i\mid\boldsymbol{Y}_i)]\,\mathrm{d}\boldsymbol{Z}_i\mathrm{d}{\lambda}_i\\
   				&=\int J(\boldsymbol{Z}_i\mid\boldsymbol{Y}_i)E_{{\lambda}_i\sim q}[\log \mathbb{W}(\boldsymbol{Z}_i,{\lambda}_i,\boldsymbol{Y}_i)]\,\mathrm{d}\boldsymbol{Z}_i-\int J(\boldsymbol{Z}_i\mid\boldsymbol{Y}_i) \log J(\boldsymbol{Z}_i\mid\boldsymbol{Y}_i)\,\mathrm{d}\boldsymbol{Z}_i+constant\\
   				&=\int J(\boldsymbol{Z}_i\mid\boldsymbol{Y}_i)\log(\frac{\widetilde{J}(\boldsymbol{Z}_i\mid\boldsymbol{Y}_i)}{J(\boldsymbol{Z}_i\mid\boldsymbol{Y}_i)})\,\mathrm{d}\boldsymbol{Z}_i+constant\\
   				&=-D_{\mathrm{KL}}(J(\boldsymbol{Z}_i\mid\boldsymbol{Y}_i))\,\big\|\,\widetilde{J}(\boldsymbol{Z}_i\mid\boldsymbol{Y}_i))+constant,
   			\end{align*}
   		where
   		\begin{equation}
   			\log \widetilde{J}(\boldsymbol{Z}_i\mid\boldsymbol{Y}_i) = \mathbb{E}_{\boldsymbol{\lambda}\sim q} \left[ \log \mathbb{W}(\boldsymbol{Z}_i,{\lambda}_i,\boldsymbol{Y}_i) \right] + \mathrm{constant.},
   		\end{equation}
   		and the constant term is used to normalize $\widetilde{J}(\boldsymbol{Z}_i\mid\boldsymbol{Y}_i)$ into a valid probability density function and is independent of $\mathbb{Z}$. By the nonnegativity of the Kullback--Leibler divergence, we have
   		\begin{equation}
   			-D_{\mathrm{KL}} \left[{J}(\boldsymbol{Z}_i\mid\boldsymbol{Y}_i) \,\big\|\, \widetilde{J}(\boldsymbol{Z}_i\mid\boldsymbol{Y}_i) \right] \le 0.
   		\end{equation}
   		Therefore, the maximum value of this term is attained if and only if
   		\begin{equation}
   			{J}(\boldsymbol{Z}_i\mid\boldsymbol{Y}_i) = \widetilde{J}(\boldsymbol{Z}_i\mid\boldsymbol{Y}_i),
   		\end{equation}
   		under which the ELBO $\mathcal{L}(J,q)$ is maximized. Consequently, for a given $q({\lambda}_i\mid\boldsymbol{Y}_i)$, the optimal variational density $J^*(\boldsymbol{Z}_i\mid\boldsymbol{Y}_i)$ satisfies
   		\begin{align}\label{a14}
   				J^*(\boldsymbol{Z}_i\mid\boldsymbol{Y}_i) \propto \exp \left\{ \mathbb{E}_{{\lambda}_i\sim q} \left[ \log \mathbb{W}(\boldsymbol{Z}_i,{\lambda}_i,\boldsymbol{Y}_i) \right] \right\} .
   		\end{align}
   		Similarly, for a given $J(\boldsymbol{Z}_i\mid\boldsymbol{Y}_i)$, the optimal variational density $q^*({\lambda}_i\mid\boldsymbol{Y}_i)$ satisfies
   		\begin{align}\label{a15}
   			q^*({\lambda}_i\mid\boldsymbol{Y}_i) \propto \exp \left\{ \mathbb{E}_{\boldsymbol{Z}_i\sim J} \left[ \log \mathbb{W}(\boldsymbol{Z}_i,{\lambda}_i,\boldsymbol{Y}_i) \right] \right\} .
   		\end{align}
   	By substituting the expression of $p(\boldsymbol{Z}_i,{\lambda}_i,\boldsymbol{Y}_i)$ into \eqref{a14} and \eqref{a15},we have
   	\begin{align}
   		&J^*(\boldsymbol{Z}_i\mid\boldsymbol{Y}_i)\propto \prod_{j=1}^{m_i}\left[ J_0^*(\Delta Z_{i,j}\mid\boldsymbol{Y}_i)\prod_{k=1}^{K}J_k^*(\Delta y_{i,k,j}-\Delta Z_{i,j}\mid\boldsymbol{Y}_i)\right] ,\label{a16}\\
   		&q^*({\lambda_i}\mid\boldsymbol{Y}_i)\propto\exp\left\lbrace -\left[\sum_{j=1}^{m_i}\left( \sum_{k=1}^{K}E_{Z_i\sim J}[\Delta y_{i,k,j}-\Delta Z_{i,j}]+E_{Z_i\sim J}[\Delta Z_{i,j}]\right) +\rho \right]\lambda_i \right\rbrace \nonumber\\
   		&\quad\quad\quad\quad\quad\quad\times\lambda_i^{\sum_{j=1}^{m_i}\left( \sum_{k=1}^{K}\Delta\eta_{i,k,j} +\Delta\eta_{i,0,j}\right) +\omega-1}.\label{a17}
   	\end{align}
   	The derivation of \eqref{a16} and \eqref{a17} is presented in Appendix \ref{A1}. 
   	As we can see from \eqref{a16} and \eqref{a17}, $J_0^*(\Delta Z_{i,j}\mid\boldsymbol{Y}_i)$, $J_k^*(\Delta y_{i,k,j}-\Delta Z_{i,j}\mid\boldsymbol{Y}_i)$ and $q^*({\lambda_i}\mid\boldsymbol{Y}_i)$ are also gamma distributed, namely,
   	\begin{align*}
   		&J_0^*(\Delta Z_{i,j}\mid\boldsymbol{Y}_i)\sim Ga(\Delta\eta_{i,0,j},\mathbb{E}_{\lambda_i\sim q}[\lambda_i]),\\
   		&J_k^*(\Delta y_{i,k,j}-\Delta Z_{i,j}\mid\boldsymbol{Y}_i)\sim Ga(\Delta\eta_{i,k,j},\mathbb{E}_{\lambda_i\sim q}[\lambda_i]),\\
   		&q^*({\lambda_i}\mid\boldsymbol{Y}_i)\sim Ga\left( \sum_{j=1}^{m_i}\left( \sum_{k=1}^{K}\Delta\eta_{i,k,j} +\Delta\eta_{i,0,j}\right) +\omega,\sum_{j=1}^{m_i}\left( \sum_{k=1}^{K}\mathbb{E}_{Z_i\sim J}[\Delta y_{i,k,j}-\Delta Z_{i,j}]+\mathbb{E}_{Z_i\sim J}[\Delta Z_{i,j}]\right) +\rho\right).
   	\end{align*}
   	
   	However, according to \eqref{a16}, although 
   	$J_0^*(\cdot)$ and $J_k^*(\cdot)$ both have Gamma-kernel forms, 
   	the variational density of $\Delta Z_{i,j}$ induced by \eqref{a16} 
   	is generally not a standard Gamma density. Therefore, at the $r$th iteration of the EM algorithm, for a given $\bar{\lambda}_i=\mathbb{E}_{\lambda_i\sim q}[\lambda_i]$, define
   	\begin{align*}
   		J_{i,j}^{(r)}(z)=\frac{c_{i,j}^{(r)}(z)}{C_{i,j}^{(r)}}
   	\end{align*}
   	where 
   	\begin{align*}
   		&c_{i,j}^{(r)}(z)=z^{\Delta\eta_{i,0,j}^{(r)}-1}\prod_{k=1}^{K}
   		(\Delta y_{i,k,j}-z)^{\Delta\eta_{i,k,j}^{(r)}-1}\exp\left\{-\bar{\lambda}_i	\left[z+\sum_{k=1}^{K}(\Delta y_{i,k,j}-z)\right]\right\},\\
   		&C_{i,j}^{(r)}=\int_0^{\Delta\tilde y_{i,j}}c_{i,j}(z)\,\mathrm{d}z, \quad \text{and} ~0<z<\tilde y_{i,j}.
   	\end{align*}
   	
   	Therefore, at the $r$th iteration of the EM algorithm, given the current parameter estimate $\boldsymbol{\Theta}^{(r)}$, the optimal variational densities $J^*(\boldsymbol{Z}_i\mid\boldsymbol{Y}_i,\boldsymbol{\Theta}^{(r)})$ and $q^*(\lambda_i\mid\boldsymbol{Y}_i,\boldsymbol{\Theta}^{(r)})$ can be obtained through iterative updates. Algorithm \ref{alg:VI} summarizes the implementation procedure of this variational inference scheme. 
   		\begin{algorithm}
   			\caption{The VI procedure for the conditional distribution $\mathbb{W}(\boldsymbol{Z}_i,{\lambda}_i\mid\boldsymbol{Y}_i)$.}
   			\label{alg:VI}
   			\begin{algorithmic}
   				\Statex \textbf{Input:} Current estimates:$\boldsymbol{\Theta}^{(r)}= \{\alpha_0^{(r)},~\alpha_k^{(r)},~\beta_k^{(r)},~\rho^{(r)}~,\omega^{(r)}\}$; Observed degradation increments $\Delta y_{i,k,j}~i=1,2,\dots,n,~k=1,2,\dots,K,~j=1,2,\dots,m_i$; Error tolerance $\epsilon_2>0$.
   				\Statex \textbf{Output:} $J_0^*(\Delta Z_{i,j}\mid\boldsymbol{Y}_i,\boldsymbol{\Theta}^{(r)}),~J_k^*(\Delta y_{i,k,j}-\Delta Z_{i,j}\mid\boldsymbol{Y}_i,\boldsymbol{\Theta}^{(r)})$ and $q^*({\lambda_i}\mid\boldsymbol{Y}_i,\boldsymbol{\Theta}^{(r)})$.
   				\State Compute $A_i^{(r)}=\sum_{j=1}^{m_i}\left( \sum_{k=1}^{K}\Delta\eta^{(r)}_{i,k,j} +\Delta\eta^{(r)}_{i,0,j}\right) +\omega^{(r)}$;
   				\State Initialize $B_i^{\langle 0\mid r\rangle}={\rho^{(r)}}$, $\xi_{i}^{\langle0\mid r\rangle}=A_i^{(r)}/B_i^{\langle0\mid r\rangle}$, $d=1$ and $\text{error}=\epsilon_2+1$
   				
   				\While{$\text{error} > \epsilon_2$}
   				\For{$j=1,2,\dots,m_i$}
   				\State Compute 
   				\begin{align*}
   					c_{i,j}^{\langle d\mid r\rangle}(z)
   					&=
   					z^{\Delta\eta_{i,0,j}^{(r)}-1}
   					\prod_{k=1}^{K}
   					(\Delta y_{i,k,j}-z)^{\Delta\eta_{i,k,j}^{(r)}-1}  \\
   					&\quad \times
   					\exp\left\{
   					-\xi_i^{\langle d-1\mid r\rangle}
   					\left[
   					z+
   					\sum_{k=1}^{K}
   					(\Delta y_{i,k,j}-z)
   					\right]
   					\right\},
   					\qquad
   					0<z<\Delta\tilde y_{i,j};
   				\end{align*}
   				\State Compute $C_{i,j}^{\langle d\mid r\rangle}=\int_0^{\Delta\tilde y_{i,j}}c_{i,j}^{\langle d\mid r\rangle}(z)\,\mathrm{d}z;$
   				\State Compute $J_{i,j}^{\langle d\mid r \rangle}={c_{i,j}^{\langle d\mid r\rangle}(z)}/{C_{i,j}^{\langle d\mid r\rangle}}$;
   				\State Compute $\mu_{i,j}^{\langle d\mid r\rangle}=\int_0^{\Delta\tilde y_{i,j}}zJ_{i,j}^{\langle d\mid r\rangle}(z)\,\mathrm{d}z$
   				\For{$k=1,2,\dots,K$}
   				\State Compute $\tau_{i,k,j}^{\langle d\mid r\rangle}=\Delta y_{i,k,j}-\mu_{i,j}^{\langle d\mid r\rangle}$;
   				\EndFor
   				\EndFor
   				\State Compute $B_i^{\langle d\mid r\rangle}=\sum_{j=1}^{m_i}\left[ \mu_{i,j}^{\langle d\mid r\rangle}+\sum_{k=1}^{K}\tau_{i,k,j}^{\langle d\mid r\rangle}\right] $;
   				\State Compute $\xi_i^{\langle d\mid r\rangle}={A_i{(r)}}/B_i^{\langle d\mid r\rangle}$;
   				\State Compute $\text{error} = \max\left\{ \left| \xi_i^{<d \mid r>} - \xi_i^{<d-1 \mid r>} \right|, \left\{ \left| B_{i}^{<d \mid r>} - B_{i}^{<d-1 \mid r>} \right| \right\} \right\}$
   				\State Set $d \leftarrow d + 1$
   				\EndWhile
   				
   				\State Determine the optimal $J^*(\boldsymbol{Z}_i\mid\boldsymbol{Y}_i,\boldsymbol{\Theta}^{(r)})$ and $q^*(\lambda_i\mid\boldsymbol{Y}_i,\boldsymbol{\Theta}^{(r)})$  by:
   				\State \quad $J^*(\boldsymbol{Z}_i\mid\boldsymbol{Y}_i,\boldsymbol{\Theta}^{(r)})=\prod_{j=1}^{m_i}J_{i,j}^{\langle d-1\mid r\rangle}(\Delta z_{i,j})$,
   				\State \quad $q^*(\lambda_i\mid\boldsymbol{Y}_i,\boldsymbol{\Theta}^{(r)})\sim Ga(A_i^{(r)},B_i^{\langle d-1\mid r\rangle})$.
   			\end{algorithmic}
   		\end{algorithm}
   		
   		After running Algorithm \ref{alg:VI}, we can calculate the conditional expectation $\zeta_i^{(r)}$, $\xi_i^{(r)}$, $\pi_{i,j}^{(r)}$, $\sigma_{i,j}^{(r)}$ and $\nu_{i,j}^{(r)}$ by
   	\begin{align*}
   		&\zeta_i^{(r)}\approx\psi\left( A_i^{(r)}\right) -\log(B_{i}^{\langle d-1\mid r\rangle}),\\
   		&\xi_i^{(r)}\approx\frac{A_i^{(r)}}{B_{i}^{\langle d-1\mid r\rangle}},\\
   		&\pi_{i,j}^{(r)}\approx{\xi_i^{\langle d-1\mid r\rangle} \mu_{i,j}^{\langle d-1\mid r\rangle}},\\
   		&\sigma_{i,k,j}^{(r)}\approx\int_{0}^{\Delta\tilde y_{i,j}}\log(\Delta y_{i,k,j}-z)J_{i,j}^{\langle d-1\mid r\rangle}\,\mathrm{d}z,\\
   		&\nu_{i,j}^{(r)}\approx\int_{0}^{\Delta\tilde y_{i,j}}\log z J_{i,j}^{\langle d-1\mid r\rangle}\,\mathrm{d}z.\\
   	\end{align*}
   	\subsection{Interval estimation}\label{p5}
   	This section focuses on constructing interval estimates for the parameter vector 
   	$\boldsymbol{\Theta}$ and its function $\mathcal{X}(\boldsymbol{\Theta})$ using the 
   	parametric bootstrap method. The detailed bootstrap procedure is summarized in 
   	Algorithm~\ref{alg:bootstrap}. For the proposed model, the analytical sampling distribution of the 
   	parameter estimators is generally difficult to derive, because of the presence of 
   	latent variables, shared random effects, and posterior expectations that are not available in closed form. Moreover, the asymptotic normal approximation based on 
   	the Fisher information matrix may be inaccurate in finite samples or for complex 
   	hierarchical models. Therefore, the parametric bootstrap method is adopted in 
   	this study. By repeatedly generating pseudo-samples from the fitted model and 
   	re-estimating the parameters, this method provides a practical way to approximate 
   	the sampling uncertainty of the estimators(\cite{zhai2023multivariate}, \cite{zhai2025modeling} and \cite{zhuang2025multivariate}). It avoids the explicit derivation of 
   	asymptotic variances and can naturally account for the combined effects of the 
   	model structure, the estimation algorithm, and finite-sample variability.
   		
   	To obtain an approximate $100(1-\zeta)\%$ bootstrap confidence interval (CI) 
   	for the parameter function $\mathcal{X}(\boldsymbol{\Theta})$, the following steps are performed. First, $B$ bootstrap estimates are generated according to Algorithm~\ref{alg:bootstrap}:
   		\[
   		\left\{
   		\widehat{\boldsymbol{\Theta}}_1^*,
   		\widehat{\boldsymbol{\Theta}}_2^*,
   		\ldots,
   		\widehat{\boldsymbol{\Theta}}_B^*
   		\right\}.
   		\]
   	Then, the corresponding bootstrap values of the parameter function are computed as
   	\begin{align*}
   		\mathcal{X}_b^*=\mathcal{X}(\widehat{\boldsymbol{\Theta}}_b^*),\quad b=1,2,\dots,B.
   	\end{align*}
   	Let
   	\begin{align*}
   		\mathcal{X}_{(1)}^*\leq\mathcal{X}_{(2)}^*\leq\dots\leq\mathcal{X}_{(B)}^*
   	\end{align*}
   	denote the ordered bootstrap statistics. For a given significance level $\zeta$, the approximate $100(1-\zeta)\%$ bootstrap CI for $\mathcal{X}(\boldsymbol{\Theta})$ is given by 
   	\begin{align*}
   		\left[
   		\mathcal{X}_{(\lceil B\zeta/2\rceil)}^*,
   		\,
   		\mathcal{X}_{(\lceil B(1-\zeta/2)\rceil)}^*
   		\right],
   	\end{align*}
   	where $\mathcal{X}_{(p)}^*$ denotes the $p$th order statistic of the 
   	ordered bootstrap values, and $\lceil\cdot\rceil$ is the ceiling function used 
   	to determine the nearest upper integer position. This procedure provides an 
   	intuitive and implementable approach for quantifying the estimation uncertainty 
   	of $\mathcal{X}(\boldsymbol{\Theta})$.
   		\begin{algorithm}[H]
   			\caption{Parametric bootstrap procedure.}
   			\label{alg:bootstrap}
   			\begin{algorithmic}
   				\State \textbf{Input:} Point estimate $\hat{\boldsymbol{\Theta}}$, number of bootstrap replications $B$.
   				\State \textbf{Output:} Bootstrap estimates
   				\begin{align*}
   					\left\{\hat{\boldsymbol{\Theta}}_1^*, \hat{\boldsymbol{\Theta}}_2^*, \dots, \hat{\boldsymbol{\Theta}}_B^*\right\}.
   				\end{align*}
   				\For{$b = 1$ \textbf{to} $B$}
   				\For{$i = 1$ \textbf{to} $n$}
   				\State Generate the unit-specific shared random effect $\lambda_i^{*(b)}\sim Ga(\hat{\omega},\hat{\rho})$.
   				\For{$j = 1$ \textbf{to} $m_i$}
   				\State Generate the common shock-induced degradation increment 
   				\begin{align*}
   					\Delta {Z}_{i,j}^{*(b)}\sim Ga(\Delta \hat{\eta}_{i,0,j}, \lambda_i^{*(b)}).
   				\end{align*}
   				\For{$k = 1$ \textbf{to} $K$}
   				\State Generate the intrinsic degradation increment 
   				\begin{align*}
   					\Delta {X}_{i,k,j}^{*(b)}\sim Ga(\Delta \hat{\eta}_{i,k,j}, \lambda_i^{*(b)}).
   				\end{align*}
   				\State Construct the bootstrap total degradation increment
   				 \begin{align*}
   				 	\Delta {Y}_{i,k,j}^{*(b)} = \Delta {X}_{i,k,j}^{*(b)} + \Delta {Z}_{i,j}^{*(b)}
   				 \end{align*}
   				\EndFor
   				\EndFor
   				\EndFor
   				\State Construct the bootstrap total degradation increment dataset
   				\begin{align*}
   					\Delta \boldsymbol{Y}^{*(b)}=\{\Delta {Y}_{i,k,j}^{*(b)};i=1,2,\dots,n,k=1,2,\dots,K,j=1,2,\dots,m_i\}.
   				\end{align*}
   				\State Estimate the model parameters from $\Delta \boldsymbol{Y}^{*(b)}$ using the proposed EM algorithm, and obtain $\hat{\boldsymbol{\Theta}}_b^*$.
   				\EndFor
   				\State Return
   				\begin{align*}
   					\left\{\hat{\boldsymbol{\Theta}}_1^*, \hat{\boldsymbol{\Theta}}_2^*, \dots, \hat{\boldsymbol{\Theta}}_B^*\right\}.
   				\end{align*}
   			\end{algorithmic}
   		\end{algorithm}
\section{Maintenance optimization policy}\label{section5}
In the classical system replacement optimization problem, the system is replaced either upon failure or at a predetermined preventive replacement time $T$, whichever occurs first. Let $c_f$ and $c_p$ denote the costs of corrective replacement upon system failure and preventive replacement at time $T$, respectively. Since an unexpected system failure may incur additional loss costs, it is generally assumed that $c_f>c_p$. Let $T_g$ denote the system lifetime derived from the degradation model proposed in this manuscript. According to renewal reward theory, the long run expected cost rate per unit time can be expressed as:
\begin{align*}
	C(T)&=\frac{c_pP(T_g>T)+c_fP(T_g\leq T)}{\mathbb{E}[\min(T_g,T)]}\\
	&=\frac{c_f-(c_f-c_p)R(T)}{\int_{0}^{T}R(t)\mathrm{d} t}.
\end{align*}

For given $c_f$ and $c_p$, the maintenance optimization problem can be formulated as determining the optimal replacement time $T^*$ by minimizing the long run expected cost rate $C(T)$. In general, this one dimensional optimization problem can be solved using numerical optimization methods such as Brent's method. In particular, if the hazard rate function $h(t)$ of the system lifetime $T_g$ is increasing with $t$ to $\infty$, then a unique optimal replacement time \(T^*\) exists and satisfies
\begin{align*}
	h(T^*)\int_{0}^{T^*}R(t)\,dt+R(T^*) =\frac{c_f}{c_f-c_p}.
\end{align*}

Therefore, under these conditions, $T^*$ can be obtained by numerically solving the above optimality equation, and the corresponding minimum long run expected cost rate is
\begin{align*}
	C(T^*)=(c_f-c_p)h(T^*).
\end{align*}
\section{Simulation study}\label{section6}
    \subsection{Validation of model parameter estimators}
	To evaluate the effectiveness of the proposed method, a Monte Carlo simulation study is conducted based on degradation data generated from model [\ref{a21}]. We consider a multivariate degradation process with three performance characteristics (PCs), that is, $K = 3$. The shape functions $\eta_k(t)$ and $\eta_0(t)$ are specified as $\eta_k(t) = \beta_k t^{\alpha_k}, ~\eta_0(t) = t^{\alpha_0}$,
	respectively. The true parameter values are set as $\boldsymbol{\alpha} = (1.2, 1.8, 1.5)', ~ \boldsymbol{\beta} = (3.0, 3.5, 4.0)'$, $\alpha_0 = 0.8$, $\rho = 1$ and $\omega = 4$. Assume that the degradation processes of $n$ units are periodically measured with an interval of $t = 1$, and all units have the same number of measurements, that is, $m_1 = \dots = m_n \equiv m$.
	To investigate the effects of sample size and measurement frequency on parameter estimation, we consider configurations with $n=10, 30, 50$ and $m=20, 40$, namely, $(n,m)=(10,20), (10,40), (30,20), (30,40), (50,20)$, and $(50,40)$.
    For each $(n, m)$ configuration, 500 simulated degradation datasets are generated, and the proposed EM algorithm incorporating variational inference (VI) is applied to estimate the model parameters. A parametric bootstrap procedure with $B = 500$ bootstrap resamples is also implemented for each dataset to construct 95\% confidence intervals. Based on these results, the relative bias (RB) and relative root mean squared error (RRMSE) of the point estimates, together with the coverage probabilities of the 95\% confidence intervals, are calculated. For any true parameter $\theta \in \boldsymbol{\Theta}$, the RB and RRMSE are defined as
    \begin{equation*}
    	\operatorname{RB}(\hat{\theta}) = \frac{1}{500} \sum_{r=1}^{500} \frac{\hat{\theta}^{(r)} - \theta}{\theta},
    \end{equation*}
    and
    \begin{equation*}
    	\operatorname{RRMSE}(\hat{\theta}) = \left[ \frac{1}{500} \sum_{r=1}^{500} \left( \frac{\hat{\theta}^{(r)} - \theta}{\theta} \right)^2 \right]^{1/2},
    \end{equation*}
    where $\hat{\theta}^{(r)}$ denotes the estimate obtained from the $r$th Monte Carlo replication. Smaller absolute values of RB and smaller values of RRMSE indicate higher estimation accuracy and better estimation stability.
    
	Table \ref{table1} reports the RBs and RRMSEs of the point estimates, together with the CPs of the corresponding 95\% parametric bootstrap confidence intervals, under six combinations of $(n,m)$. The results show that the estimates of $\alpha_0,\alpha_1,\alpha_2,\alpha_3$ and $\beta_1,\beta_2,\beta_3$ exhibit relatively small biases across all settings, with the absolute RBs not exceeding $0.03$. In contrast, the parameters $\rho$ and $\omega$, which characterize heterogeneity across units, show relatively large positive biases when $n=10$, with RBs of approximately $0.40$–$0.48$. As the number of test units increases, the biases of both parameters decrease markedly and fall below $0.08$ when $n=50$. This indicates that accurate estimation of the heterogeneity parameters relies more strongly on having a sufficient number of independent test units. For a fixed $m$, the RRMSEs of all parameter estimates decrease as $n$ increases, with particularly pronounced improvements for $\rho$ and $\omega$. For a fixed $n$, increasing the number of observations within each unit, $m$, generally reduces the RRMSEs of $\alpha_0$, the $\alpha_k$'s, and the $\beta_k$'s. However, the improvement for $\rho$ and $\omega$ is relatively limited, and slight non-monotonic fluctuations are observed under some settings. These results suggest that, compared with increasing the observation frequency within each unit, increasing the number of independent test units provides more effective information for identifying and estimating heterogeneity across units.
	
    For interval estimation, the CPs reported in the table range from $0.887$ to $0.967$, with most values falling between $0.92$ and $0.97$, and are therefore generally close to the nominal confidence level of $0.95$. Overall, the EM algorithm combined with VI provides accurate and stable point estimates for the degradation path parameters. The estimation accuracy of the heterogeneity parameters also improves markedly when a moderate number of test units is available, while the parametric bootstrap procedure generally provides reasonable interval coverage performance.
	\begin{table*}[t]
		\centering
		\caption{RBs and RRMSEs of the point estimates, and CPs of the 95\% confidence intervals.}
		\label{table1}
		
		\footnotesize
		\renewcommand{\arraystretch}{0.95}
		
		\begin{tabular*}{\textwidth}{@{\extracolsep{\fill}}lccccccccc}
			\toprule
			\multirow{2}{*}{Parameter}
			& \multicolumn{3}{c}{(10,20)}
			& \multicolumn{3}{c}{(10,40)}
			& \multicolumn{3}{c}{(30,20)} \\
			\cmidrule(lr){2-4}
			\cmidrule(lr){5-7}
			\cmidrule(lr){8-10}
			& RB & RRMSE & CP
			& RB & RRMSE & CP
			& RB & RRMSE & CP \\
			\midrule
			$\alpha_0$ & 0.0265 & 0.1297 & 0.8867 & 0.0262 & 0.0753 & 0.9300 & 0.0192 & 0.0640 & 0.9400 \\
			$\alpha_1$ & -0.0012 & 0.0311 & 0.9667 & 0.0007 & 0.0204 & 0.9500 & -0.0006 & 0.0172 & 0.9600 \\
			$\alpha_2$ & -0.0004 & 0.0119 & 0.9533 & 0.0001 & 0.0065 & 0.9300 & 0.0007 & 0.0065 & 0.9367 \\
			$\alpha_3$ & 0.0004 & 0.0188 & 0.9233 & 0.0007 & 0.0101 & 0.9467 & 0.0006 & 0.0103 & 0.9567 \\
			$\beta_{1}$ & 0.0034 & 0.1329 & 0.9400 & -0.0199 & 0.1031 & 0.9200 & -0.0121 & 0.0747 & 0.9333 \\
			$\beta_{2}$ & 0.0056 & 0.0853 & 0.9567 & -0.0151 & 0.0607 & 0.9267 & -0.0140 & 0.0505 & 0.9467 \\
			$\beta_{3}$    & 0.0035 & 0.1034 & 0.9300 & -0.0179 & 0.0690 & 0.9567 & -0.0125 & 0.0595 & 0.9500 \\
			$\rho$    & 0.4748 & 1.0380 & 0.9167 & 0.4814 & 1.0501 & 0.9033 & 0.1205 & 0.3412 & 0.8967 \\
			$\omega$    & 0.4218 & 0.9822 & 0.9400 & 0.4006 & 0.8923 & 0.9067 & 0.0993 & 0.3098 & 0.9000 \\
			\bottomrule
		\end{tabular*}
		
		\vspace{4pt}
		
		\begin{tabular*}{\textwidth}{@{\extracolsep{\fill}}lccccccccc}
			\toprule
			\multirow{2}{*}{Parameter}
			& \multicolumn{3}{c}{(30,40)}
			& \multicolumn{3}{c}{(50,20)}
			& \multicolumn{3}{c}{(50,40)} \\
			\cmidrule(lr){2-4}
			\cmidrule(lr){5-7}
			\cmidrule(lr){8-10}
			& RB & RRMSE & CP
			& RB & RRMSE & CP
			& RB & RRMSE & CP \\
			\midrule
			$\alpha_0$ & 0.0255 & 0.0460 & 0.9500 & 0.0149 & 0.0492 & 0.9400 & 0.0256 & 0.0394 & 0.9600 \\
			$\alpha_1$ & 0.0005 & 0.0123 & 0.9500 & 0.0001 & 0.0143 & 0.9533 & 0.0009 & 0.0099 & 0.9333 \\
			$\alpha_2$ & 0.0003 & 0.0036 & 0.9400 & 0.0005 & 0.0053 & 0.9267 & 0.0002 & 0.0027 & 0.9567 \\
			$\alpha_3$ & 0.0002 & 0.0056 & 0.9667 & 0.0003 & 0.0077 & 0.9500 & 0.0007 & 0.0046 & 0.9400 \\
			$\beta_{1}$ & -0.0275 & 0.0641 & 0.9467 & -0.0139 & 0.0619 & 0.9367 & -0.0295 & 0.0576 & 0.9233 \\
			$\beta_{2}$ & -0.0230 & 0.0404 & 0.9300 & -0.0143 & 0.0411 & 0.9367 & -0.0227 & 0.0339 & 0.9300 \\
			$\beta_{3}$    & -0.0235 & 0.0448 & 0.9433 & -0.0135 & 0.0455 & 0.9367 & -0.0255 & 0.0402 & 0.9467 \\
			$\rho$    & 0.1821 & 0.3832 & 0.9233 & 0.0785 & 0.2603 & 0.8967 & 0.0763 & 0.2429 & 0.9467 \\
			$\omega$    & 0.1370 & 0.3222 & 0.9333 & 0.0619 & 0.2378 & 0.9100 & 0.0483 & 0.2223 & 0.9300 \\
			\bottomrule
		\end{tabular*}
		
	\end{table*}
	\subsection{Performance of reliability estimation}
   To evaluate the performance of the proposed method in reliability estimation, we calculate the mean time to failure (MTTF) based on the estimated model parameters. The failure thresholds for the three PCs are set to 30, that is, $\tau_1=\tau_2=\tau_3=30$. The MTTF of a unit is defined as
   \begin{equation*}
   	\mathrm{MTTF}
   	=\mathbb{E}(T)
   	=\int_{0}^{\infty}R(t)\,\mathrm{d}t,
   \end{equation*}
   where $R(t)$ denotes the reliability function of the unit. Under the specified true model parameters and failure thresholds, the true MTTF is calculated numerically as 6.7588.
		
	For each simulated dataset, the MTTF is calculated using the corresponding estimated model parameters and compared with the true MTTF obtained under the true parameter values. Figure \ref{fig2} presents boxplots of the absolute relative errors of the MTTF estimates under different simulation scenarios and sample size combinations. Consistent with the preceding parameter estimation results, the proposed method yields relatively small errors in the MTTF estimates. As the sample size increases, the MTTF estimation errors generally decrease, while their dispersion and the frequency of outliers also decline, indicating that both the accuracy and stability of reliability estimation improve as more sample information becomes available.
	\begin{figure}[htbp]
		\centering
		\includegraphics[
		width=\linewidth,
		height=0.85\textheight,
		keepaspectratio
		]{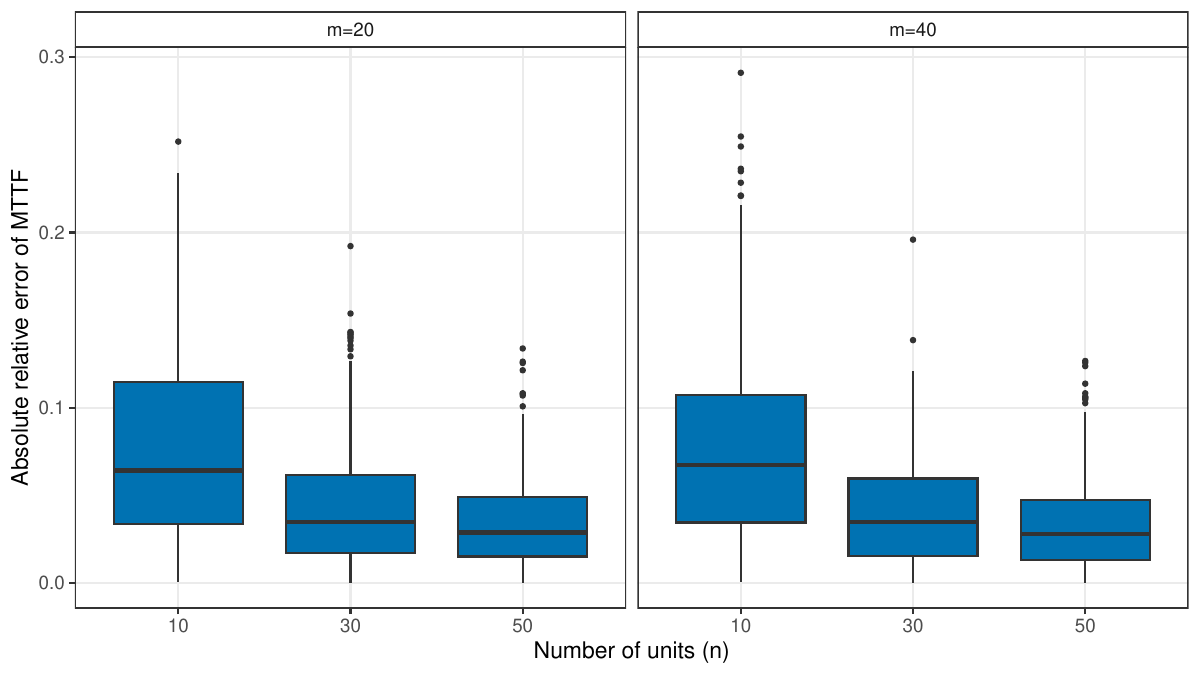}
		\caption{Absolute relative errors of the MTTF under different combinations of $(n,m)$.}
		\label{fig2}
	\end{figure}
	
    To further investigate the effect of model misspecification on reliability estimation, we consider two simplified scenarios. In Scenario 1, the unit specific random effect is omitted, while the dependence among PCs induced by the common environmental degradation process is retained:
	\begin{equation*}
			X_k(t) \sim \text{Ga}(\eta_k(t), \lambda), \qquad Z(t) \sim \text{Ga}(\eta_0(t), \lambda), \qquad k = 1, 2, \dots, K,
		\end{equation*}
	where $\lambda$ is treated as a fixed parameter shared by all test units. In Scenario 2, the degradation processes of the PCs are assumed to be mutually independent; that is, both the unit specific random effect and the common environmental degradation process are omitted. For both simplified models, the model parameters are estimated using the MLE. To ensure a consistent comparison among the three models, the same simulated degradation datasets are used under each sample size combination for reliability estimation based on the proposed model, Scenario 1, and Scenario 2, respectively. Figure~\ref{fig3} presents boxplots of the absolute relative errors of the MTTF estimates for the three models under different sample size combinations.
	\begin{figure}[htbp]
		\centering
		\includegraphics[
		width=\linewidth,
		height=0.85\textheight,
		keepaspectratio
		]{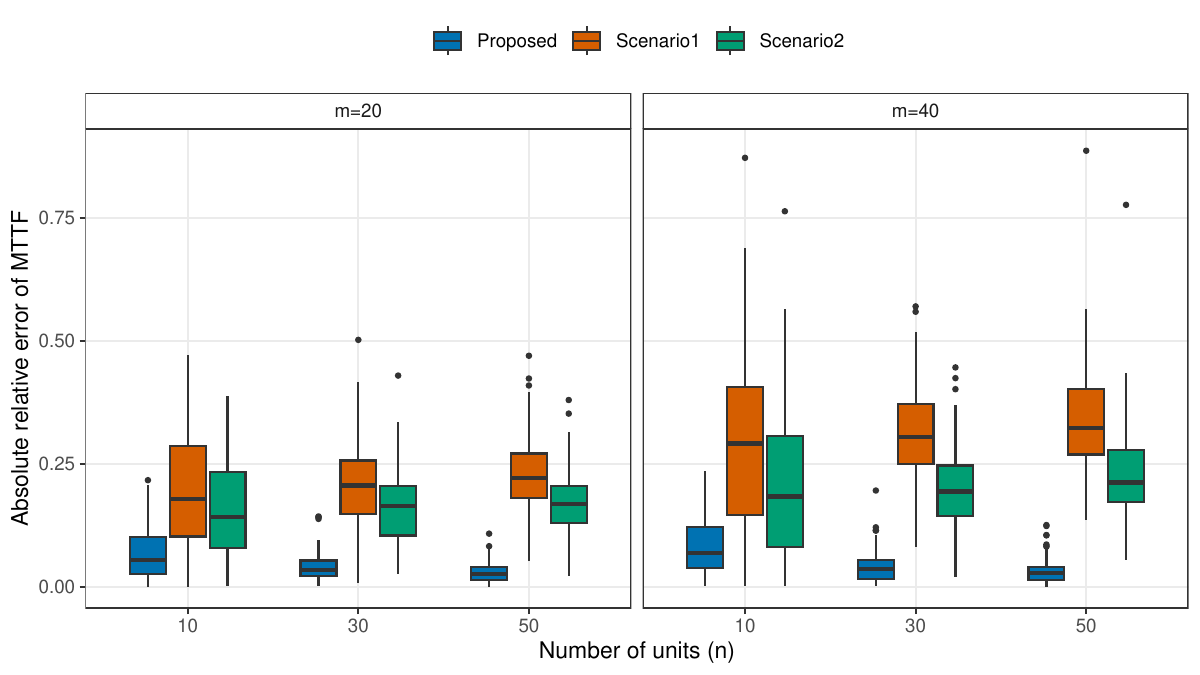}
		\caption{Absolute relative errors of MTTF under different scenarios.}
		\label{fig3}
	\end{figure}
		
	The numerical results show that the average RRMSEs of the MTTF estimates ($\times 10^{-2}$) under Scenarios 1 and 2 are 28.8832 and 21.3087, respectively, whereas the corresponding average RRMSE for the proposed model is only 6.1061. These results indicate that, for degradation data exhibiting both unit heterogeneity and dependence among performance characteristics, ignoring the unit specific random effect and the dependence structure among the PCs can substantially reduce the accuracy of reliability estimation. Inaccurate reliability estimates may further lead to inappropriate maintenance decisions, additional maintenance costs, and an increased risk of production interruptions.
\section{Application to the Coating Material Degradation Data}\label{section7}
In this case study, we analyze multivariate degradation data from an amine cured epoxy coating system. The data originate from experiments conducted by the National Institute of Standards and Technology(NIST) using the SPHERE accelerated weathering facility. Detailed descriptions of the experimental design and testing procedures can be found in Gu et al. (2009), while the machine readable data used in this analysis were obtained from the official supplementary materials made publicly available by Lu et al. (2021). During the experiments, coating specimens were subjected to accelerated weathering under strictly controlled ultraviolet(UV) radiation, temperature, and relative humidity conditions. Their chemical degradation was monitored at different exposure times using transmission Fourier transform infrared(FTIR) spectroscopy. Changes in the intensities of characteristic FTIR peaks reflect changes in the corresponding chemical structures during degradation. In this study, the following three peaks are selected as performance characteristics (PCs): 
\begin{itemize}
	\item the peak at $1250\,\mathrm{cm}^{-1}$ from C-O stretching of aryl ether, denoted by $Y_1$;
	
	\item the peak at $1510\,\mathrm{cm}^{-1}$ from benzene ring stretching, denoted by $Y_2$;
	
	\item the peak at $2925\,\mathrm{cm}^{-1}$ from CH$_2$ antisymmetric stretching, denoted by $Y_3$.
\end{itemize} 
Thus, three PCs are considered in this case, that is, \(K=3\).

The complete publicly available dataset contains 48 coating specimens under 12 combinations of experimental conditions. In this study, we select the condition with a band pass filter having a nominal center wavelength of 353 nm, a 100\% neutral density filter, a temperature of 35 ℃, and a relative humidity of 25\%. Four replicate units are available under this condition, namely NSS-0015-03, NSS-0015-07, NSS-0015-08, and NSS-0015-10, giving $n=4$. For each unit, all three PCs were measured at the same 25 irregularly spaced observation times, including the initial time $t=0$.

Figure \ref{fig1} shows the degradation measurements of the four experimental units starting from $t=0$. Overall, all three PCs exhibit increasing degradation trends with exposure time, while clear differences are observed in their degradation rates and cumulative degradation levels. On the other hand, the degradation paths of the three PCs within the same unit exhibit similar increasing trends and show relatively synchronous changes over some observation intervals, suggesting potential positive dependence among the PCs as well as correlated degradation induced by common influencing factors. In addition, the degradation paths differ to some extent in their cumulative degradation levels across units, and the magnitude of these differences varies among the PCs, reflecting unit heterogeneity. These observations motivate the use of a multivariate degradation model that can simultaneously characterize dependence among PCs and heterogeneity across units.
\begin{figure}[H]
	\centering
	\includegraphics[
	width=\textwidth,
	height=0.72\textheight,
	keepaspectratio
	]{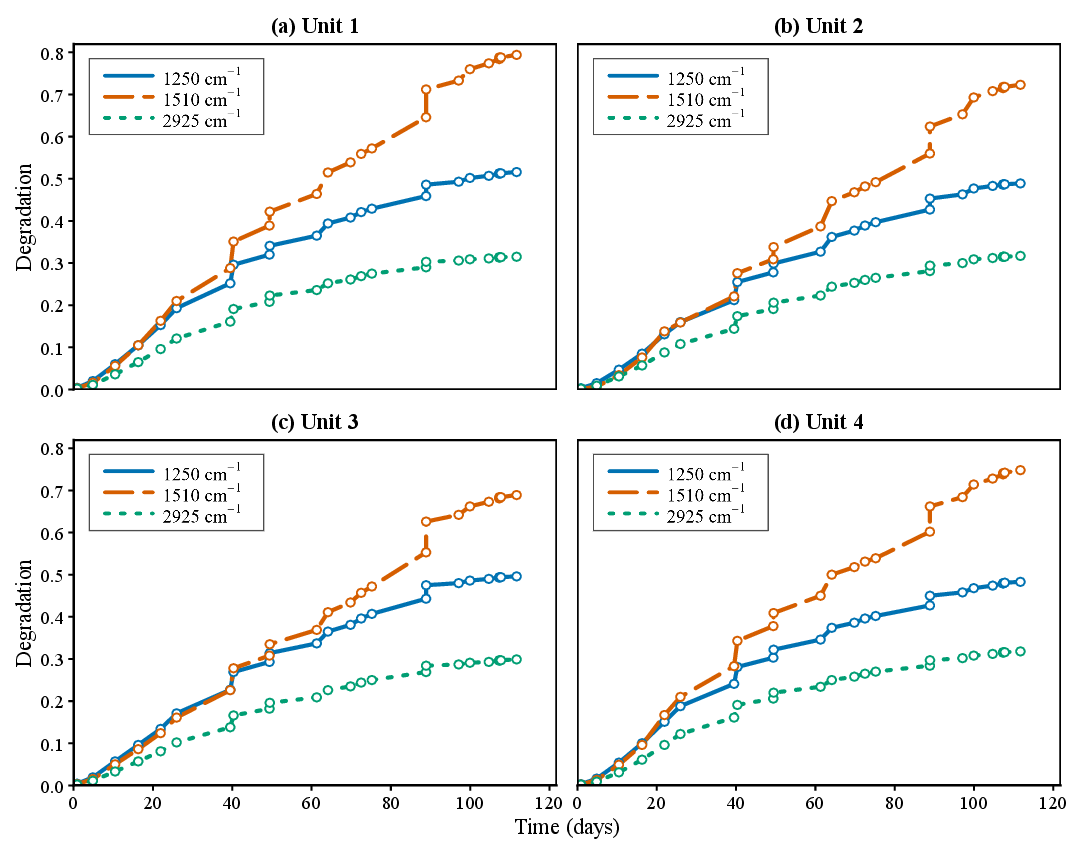}
	\vspace{-3mm}
	\caption{Degradation of the four units under the 353 nm, 100\% neutral density, 35~$^\circ$C, and 25\% relative humidity condition.}
	\label{fig1}
	\vspace{-4mm}
\end{figure}
To investigate the joint degradation behavior and dependence among the three PCs, the proposed multivariate Gamma degradation model is applied to the chemical degradation data measured at the three characteristic peaks of $1250$, $1510$, and $2925\,\mathrm{cm}^{-1}$. Since the observed degradation paths exhibit pronounced nonlinear trends, the power law functions $\eta_k(t)=\beta_k t^{\alpha_k},~ k=1,2,3,$ and $\eta_0(t)=t^{\alpha_0}$ are adopted to describe the intrinsic degradation shape functions of the PCs and the shape function of the shared environmental damage process, respectively. Table 2 reports the estimates of the model parameters $ \boldsymbol{\theta} =(\alpha_1,\alpha_2,\alpha_3,\beta_1,\beta_2,\beta_3,\alpha_0,\rho,\omega)',$ and the corresponding 90\% confidence intervals obtained using the parametric bootstrap method. Based on the parameter estimates in Table 2, the three PCs exhibit noticeably different mean degradation behaviors. Over the observed time range, $Y_2$ has the largest estimated mean degradation rate, followed by $Y_1$, whereas $Y_3$ has the smallest estimated mean degradation rate. This result is consistent with the degradation paths observed in Figure \ref{fig1}. In addition, according to Proposition \ref{pro1}, the estimated mean degradation levels of $Y_1$, $Y_2$, and $Y_3$ at the final observation time, $t=111.7$ days, are 0.5382, 0.7072, and 0.3205, respectively. These values exhibit the same ordering, $Y_2>Y_1>Y_3$, which is consistent with the above comparison of their mean degradation behaviors and further supports the ability of the proposed model to characterize the different degradation patterns of the three PCs.
\begin{table}[htbp]
	\centering
	\caption{Parameter estimates and 90\% parametric-bootstrap confidence intervals for the proposed multivariate Gamma degradation model.}
	\label{tab2}
	\small
	\renewcommand{\arraystretch}{1.08}
	\begin{tabular*}{\linewidth}{@{\extracolsep{\fill}}lcc@{}}
		\toprule
		& \textbf{Point estimates}
		& \textbf{90\% confidence interval} \\
		\midrule
		$\hat{\alpha}_0\;(\times 10^{-1})$
		& $7.5795$ & $[7.3259,\,7.8202]$ \\
		
		$\hat{\alpha}_1\;(\times 10^{-1})$
		& $7.5509$ & $[6.5569,\,8.5562]$ \\
		
		$\hat{\alpha}_2\;(\times 10^{0})$
		& $1.1504$ & $[1.0247,\,1.2792]$ \\
		
		$\hat{\alpha}_3\;(\times 10^{-1})$
		& $2.4904$ & $[1.1703,\,9.6102]$ \\
		
		\addlinespace[2pt]
		
		$\hat{\beta}_1\;(\times 10^{-1})$
		& $7.0902$ & $[4.2486,\,11.7351]$ \\
		
		$\hat{\beta}_2\;(\times 10^{-1})$
		& $1.9371$ & $[1.0652,\,3.5362]$ \\
		
		$\hat{\beta}_3\;(\times 10^{-1})$
		& $1.3339$ & $[0.0187,\,4.0720]$ \\
		
		\addlinespace[2pt]
		
		$\hat{\rho}\;(\times 10^{2})$
		& $3.6828$ & $[3.5987,\,4.5995]$ \\
		
		$\hat{\omega}\;(\times 10^{4})$
		& $4.1484$ & $[4.0579,\,5.1951]$ \\
		\bottomrule
	\end{tabular*}
\end{table}

   Figure \ref{fig4} presents the estimated mean degradation paths and the corresponding 90\% marginal point-wise prediction intervals for the degradation processes at the three characteristic peaks. As shown in the figure, the estimated mean degradation paths capture the nonlinear degradation trends of the three performance characteristics well, while the prediction intervals generally widen over time. This indicates that the uncertainty in future degradation levels increases as degradation damage accumulates, which is consistent with the stochastic evolution of a cumulative degradation process. Overall, the observed degradation paths of all three PCs are well covered by the corresponding 90\% prediction intervals, suggesting that the proposed model provides a reasonable description of both the observed degradation behavior and its associated uncertainty.
   \begin{figure}[htbp]
   	\centering
   	\includegraphics[
   	width=\linewidth,
   	height=0.85\textheight,
   	keepaspectratio
   	]{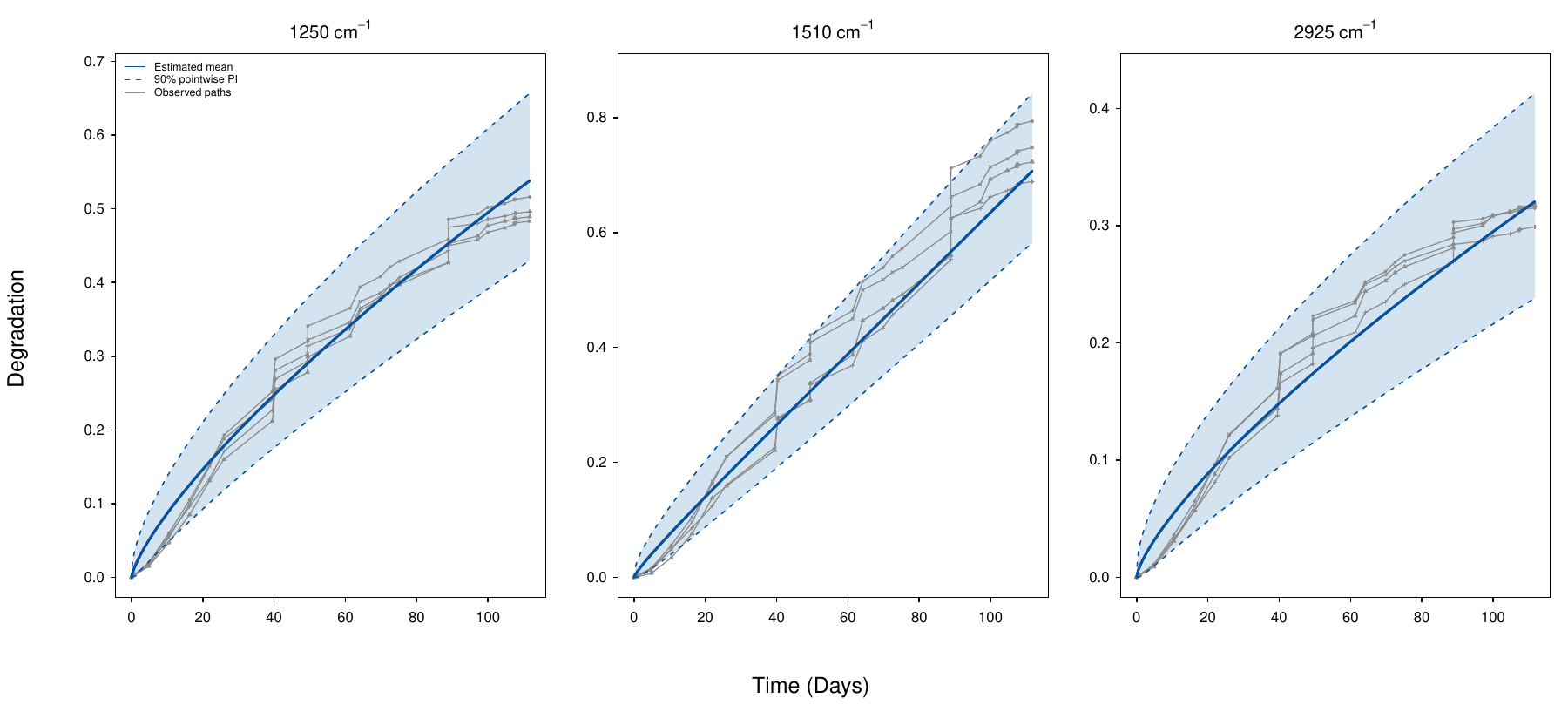}
   	\caption{The estimated mean degradation path and 90\% prediction band (PB) for the degradation at 1250 cm$^{-1}$, 1510 cm$^{-1}$, and 2925 cm$^{-1}$.}
   	\label{fig4}
   \end{figure}
   
   To further investigate the dependence among the three degradation processes, we examine the two sources of dependence incorporated in the proposed model, namely the shared environmental degradation process \(Z(t)\) and the shared random effect \(\lambda\). Conditional on \(\lambda\), the dependence among different performance characteristics is primarily induced by the common process \(Z(t)\). After integrating over \(\lambda\), the unconditional correlations additionally incorporate the dependence arising from latent heterogeneity across units. According to Proposition \ref{pro1} and the estimated model parameters, the conditional correlation coefficients among the three performance characteristics, given \(\lambda\), are estimated as
   
   \begin{align*} 
   	\hat{\rho}\{Y_1,Y_2\mid\lambda\} &=0.5133\;([0.4758,\,0.5470]),\\ \hat{\rho}\{Y_1,Y_3\mid\lambda\} &=0.7625\;([0.7314,\,0.7894]),\\ \hat{\rho}\{Y_2,Y_3\mid\lambda\} &=0.6652\;([0.6352,\,0.6929]). 
   	\end{align*}
   
   The corresponding unconditional correlation coefficients are estimated as
   \begin{align*} 
   	\hat{\rho}\{Y_1,Y_2\} &=0.5141\;([0.4764,\,0.5477]),\\ 
   	\hat{\rho}\{Y_1,Y_3\} &=0.7627\;([0.7316,\,0.7896]),\\ 
   	\hat{\rho}\{Y_2,Y_3\} &=0.6655\;([0.6355,\,0.6933]). 
   \end{align*}
   
   These results indicate clear positive dependence among the three performance characteristics. The strongest correlation is observed between the degradation processes at \(1250\) and \(2925\,\mathrm{cm}^{-1}\), followed by that between \(1510\) and \(2925\,\mathrm{cm}^{-1}\), while the correlation between \(1250\) and \(1510\,\mathrm{cm}^{-1}\) is relatively weaker. Notably, the three unconditional correlation coefficients are only slightly larger than their corresponding conditional counterparts, and the differences are very small. This suggests that, under the present data and estimated parameter values, the additional dependence induced by the shared random effect \(\lambda\) is relatively limited, whereas the dependence structure among the three performance characteristics is driven primarily by the shared environmental degradation process \(Z(t)\). In other words, although the shared random effect is still used to characterize latent heterogeneity across test units and thereby contributes additional dependence among the PCs, its contribution in this case is relatively weak compared with that induced by the shared environmental degradation process. These findings further demonstrate the importance of accounting for both the shared environmental degradation effect and the additional dependence arising from unit heterogeneity when modeling the joint degradation behavior at the three characteristic peaks.
   
   Following the related studies of \cite{lu2021general} and \cite{zhai2023multivariate}, the soft failure thresholds for the three performance characteristics (PCs) at $1250$, $1510$, and $2925~\mathrm{cm}^{-1}$ are set to $0.40$, $0.58$, and $0.27$, respectively; that is, $ (\tau_1,\tau_2,\tau_3)=(0.40,0.58,0.27)$. Under the failure criterion adopted in this manuscript, the system is regarded as failed once any PC reaches its corresponding failure threshold. Accordingly, the system lifetime is defined as $ T_g=\lim\limits_{k=1,2,3}\inf\{t:Y_k(t)\geq\tau_k\}$. Based on the parameter estimates reported in Table 4 and the reliability function in \ref{a2}, the mean time to failure (MTTF) of the system is estimated to be 75.3126 days. Figure 5 presents the corresponding system reliability function. As shown in the figure, the system reliability remains close to 1 during the early exposure period and then gradually decreases as degradation damage accumulates, reaching 0.5 at approximately 75.20 days. This behavior reflects the cumulative effect of the joint degradation of the three performance characteristics on the system failure risk and indicates that the likelihood of system failure increases progressively with exposure time.
   
   On this basis, we consider the corresponding preventive replacement policy. To examine the economic performance of this maintenance policy, the preventive replacement cost is normalized to $c_p=1$, while the failure replacement cost is set to $c_f=5$. The numerical results show that the optimal preventive replacement time is 49.5875 days. At this point, the system reliability is approximately 0.9685, and the corresponding minimum long run expected cost rate is 0.0228. Figure 5 also presents the long run expected cost rate as a function of the preventive replacement time. The curve exhibits a pronounced U shape. If replacement is scheduled too early, excessively frequent planned replacements increase the preventive maintenance cost. In contrast, if replacement is delayed too long, the probability of system failure before the planned replacement time increases, resulting in higher failure replacement costs. Therefore, the balance between preventive and failure replacement costs leads to a minimum long run expected cost rate at approximately 49.59 days.  Furthermore, if no preventive maintenance is implemented and the system is replaced only after failure, the long run cost rate is approximately 0.0664. Compared with this corrective replacement policy, the optimal preventive replacement policy reduces the long run cost rate by approximately $65.7\%$. These results indicate that a preventive replacement policy based on the reliability information obtained from the proposed multivariate degradation model can substantially reduce maintenance costs while controlling the risk of system failure, thereby providing a quantitative basis for maintenance decision making for the coating system.
   \begin{figure}[H]
   	\centering
   	\includegraphics[
   	width=\textwidth,
   	height=0.72\textheight,
   	keepaspectratio
   	]{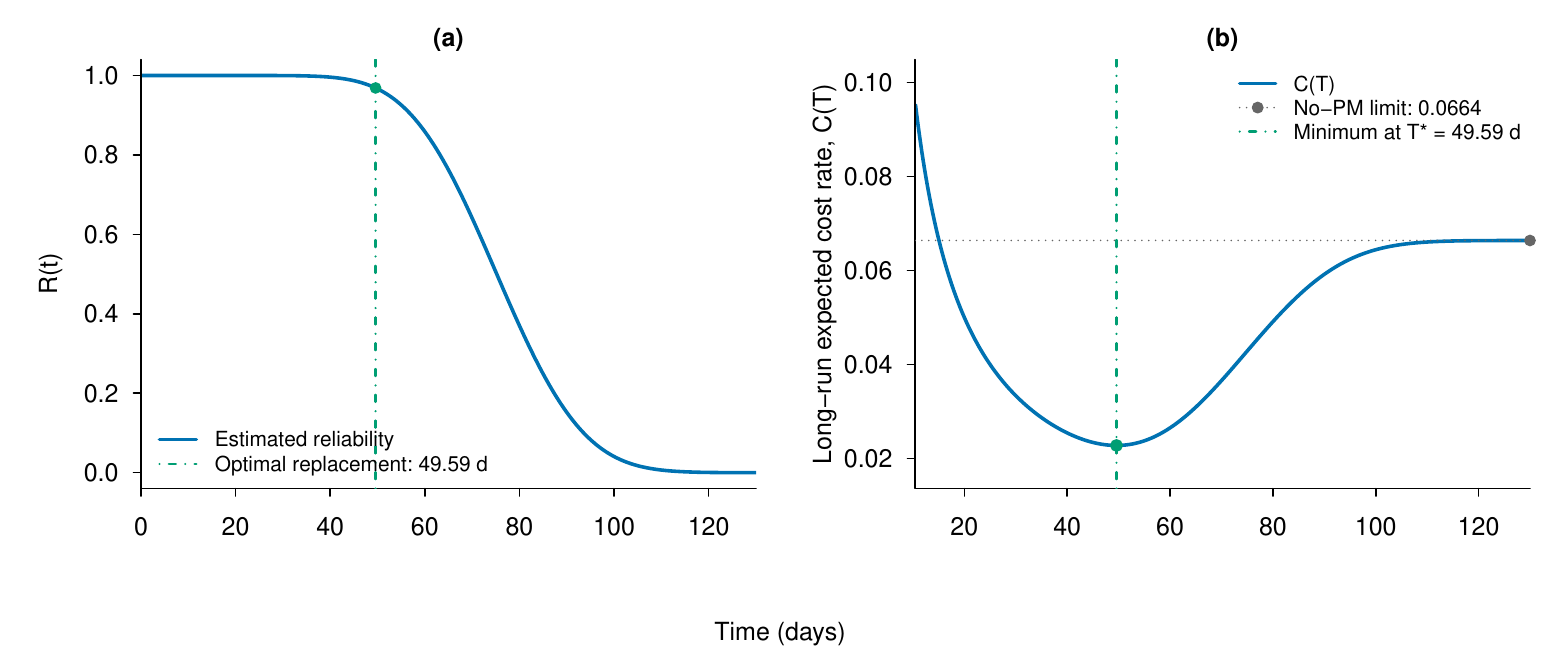}
   	\vspace{-3mm}
   	\caption{Estimated system reliability and long-run expected cost rate.}
   	\label{fig5}
   	\vspace{-4mm}
   \end{figure}
   
\section{Concluding remarks}\label{section8}
This study develops a multivariate Gamma process model for the joint analysis of multiple monotonically degrading performance characteristics subject to shared environmental damage and unit heterogeneity. While preserving the nonnegative increments of the degradation processes, the proposed model distinguishes two interpretable sources of dependence, namely the shared degradation component and the shared random effect. We further derive the marginal moments, covariance, and time varying correlation structure of the model, together with an explicit integral representation of the threshold based system reliability. A deterministic quadrature method is developed to enable stable and reproducible reliability evaluation. For statistical inference, the latent shared degradation amount and the random effect are treated as missing variables within an EM framework, while variational inference and deterministic quadrature are combined to evaluate the conditional expectations that are analytically intractable. In addition, a parametric bootstrap procedure is employed to quantify the uncertainty in parameter estimation.

Under the current model specification, the shared degradation process is assumed to affect all performance characteristics with the same loading. Therefore, in future research, when different performance characteristics exhibit different sensitivities to shared environmental damage, PC specific loading parameters or multiple shared Gamma factors may be introduced to enhance model flexibility. In addition, measurement error, operating covariates, heterogeneous initial degradation levels, and random degradation onset times may be incorporated to broaden the range of experimental settings and practical applications that can be accommodated by the model. Finally, the age replacement policy considered in this study could be further extended to condition based or dynamically updated maintenance policies that account for inspection costs, imperfect maintenance, downtime losses, and uncertainty in both failure thresholds and economic parameters.
\section*{Funding}
This research was supported by the National Natural Science Foundation of China (No. 12361060), the Science and Technology Program of Gansu Province (No. 24JRRA144), the College Teachers Innovation Foundation Project of the Gansu Provincial Department of Education (No. 2024A002), the Industrial Support Program of the Gansu Provincial Department of Education (No. 2025CYZC-016), and the Innovative Fundamental Research Group Project of Gansu Province (No. 25JRRA002).
\section*{Disclosure statement}
No potential conflict of interest was reported by the author(s).
\bibliographystyle{mystyle}

\setlength{\bibsep}{0.4ex}
\begin{spacing}{0.9}
\bibliography{ref-quan}
\end{spacing}
\appendix

\renewcommand{\thesection}{Appendix~A\arabic{section}}
\renewcommand{\thesubsection}{A\arabic{section}.\arabic{subsection}}
\renewcommand{\theequation}{A\arabic{equation}}
\newpage
\section{Derivation of \eqref{a16} and \eqref{a17}}
\label{A1}
\setcounter{equation}{0}
Note that $\log(\mathbb{W}(\boldsymbol{Z}_i,{\lambda}_i,\boldsymbol{Y}_i))=\log(\mathcal{L}(\boldsymbol{Y}_i\mid \boldsymbol{Z}_i,\lambda_i))+\log(\Psi(\boldsymbol{Z}_i\mid\lambda_i))+\log(\varphi(\lambda_i))$, where $\mathcal{L}(\boldsymbol{Y}_i\mid \boldsymbol{Z}_i,\lambda_i)$, $\Psi(\boldsymbol{Z}_i\mid\lambda_i)$ and $\varphi(\lambda_i)$ are given in \eqref{a18}, \eqref{a19} and \eqref{a20}, respectively. Then, we can derive $J^*(\boldsymbol{Z}_i\mid\boldsymbol{Y}_i)$ as follows
\begin{align*}
	J^*(\boldsymbol{Z}_i\mid\boldsymbol{Y}_i) &\propto \exp \left\{ \mathbb{E}_{{\lambda}_i\sim q} \left[ \log \mathbb{W}(\boldsymbol{Z}_i,{\lambda}_i,\boldsymbol{Y}_i) \right] \right\}\\
	&\propto\exp \left\lbrace -\sum_{j=1}^{m_i}\sum_{k=1}^{K}(\Delta y_{i,k,j}-\Delta z_{i,j})\mathbb{E}_{\lambda_i\sim q}[\lambda_i]+\sum_{j=1}^{m_i}\sum_{k=1}^{K}(\Delta \eta_{i,k,j}-1)\log(\Delta y_{i,k,j}-\Delta z_{i,j})\right.\\
	&\left.\qquad\qquad -\sum_{j=1}^{m_i}\Delta z_{i,j}\mathbb{E}_{\lambda_i\sim q}[\lambda_i]+\sum_{j=1}^{m_i}(\Delta \eta_{i,0,j}-1)\log(\Delta z_{i,j})\right\rbrace\\
	&=\prod_{j=1}^{m_i} \bigg\{ \Delta z_{i,j}^{\Delta \eta_{i,0,j}-1}\exp\{-\mathbb{E}_{\lambda_i\sim q}[\lambda_i]\Delta z_{i,j}\}\prod_{k=1}^{K}(\Delta y_{i,k,j}-\Delta z_{i,j})^{\Delta \eta_{i,k,j}-1}\\
	&\quad\times\exp\{-\mathbb{E}_{\lambda_i\sim q}[\lambda_i](\Delta y_{i,k,j}-\Delta z_{i,j})\} \bigg\} \\
	&\triangleq\prod_{j=1}^{m_i}\bigg\{J_0^*(\Delta Z_{i,j}\mid\boldsymbol{Y}_i)\prod_{k=1}^{K}J_k^*(\Delta y_{i,k,j}-\Delta Z_{i,j}\mid\boldsymbol{Y}_i)\bigg\}.
\end{align*}
On the other hand, the derivation of $q^*(\lambda_i\mid\boldsymbol{Y}_i)$ is given by
\begin{align*}
	q^*({\lambda}_i\mid\boldsymbol{Y}_i) &\propto \exp \left\{ \mathbb{E}_{\boldsymbol{Z}_i\sim J} \left[ \log \mathbb{W}(\boldsymbol{Z}_i,{\lambda}_i,\boldsymbol{Y}_i) \right] \right\}\\
	&\propto\exp\left\lbrace \left[\sum_{j=1}^{m_i}\left( \sum_{k=1}^{K}\Delta\eta_{i,k,j} +\Delta\eta_{i,0,j}\right) +\omega-1\right]\log\lambda_i\right.\\
	&\left.\qquad\quad-\left[ \sum_{j=1}^{m_i}\left( \sum_{k=1}^{K}\mathbb{E}_{Z_i\sim J}[\Delta y_{i,k,j}-\Delta Z_{i,j}]+\mathbb{E}_{Z_i\sim J}[\Delta Z_{i,j}]\right) +\rho\right]\lambda_i  \right\rbrace\\
	&= \exp\left\lbrace -\left[\sum_{j=1}^{m_i}\left( \sum_{k=1}^{K}\mathbb{E}_{Z_i\sim J}[\Delta y_{i,k,j}-\Delta Z_{i,j}]+\mathbb{E}_{Z_i\sim J}[\Delta Z_{i,j}]\right) +\rho \right]\lambda_i \right\rbrace \\
	&\quad\times\lambda_i^{\sum_{j=1}^{m_i}\left( \sum_{k=1}^{K}\Delta\eta_{i,k,j} +\Delta\eta_{i,0,j}\right) +\omega-1}.
\end{align*}
\end{document}